\newtheorem{theorem}{Theorem}
\newtheorem{lemma}{Lemma}
\newtheorem{corollary}{Corollary}
\DeclareMathOperator{\Var}{Var}
\DeclareMathOperator{\sgn}{sgn}
\title{\textbf{Interpolated Quantile Estimation: A Unified Framework Bridging Quantiles and the Mean}}
\author[1]{Sa\"{i}d Maanan\,\orcidlink{0000-0001-9365-7586}
\thanks{Corresponding author: \texttt{maanan.said@gmail.com}}}
\author[2]{Azzouz Dermoune}
\author[1]{Ahmed El Ghini}
\affil[1]{LEAM, Mohammed V University in Rabat, Rabat, Morocco}
\affil[2]{Laboratoire Paul Painlevé - CNRS-UMR 8524, Université de Lille, Lille, France}
\date{}
\begin{document}
\maketitle
\begin{abstract}
This paper develops and analyses three families of estimators that continuously interpolate between classical quantiles and the sample mean. The construction begins with an interpolated version of the $L_{1}$ loss, indexed by a location parameter $z$ and an interpolation parameter $h \ge 0$, whose minimizer $\hat q(z,h)$ yields a unified M--estimation framework. Depending on how $(z, h)$ is specified, this framework generates three distinct classes of estimators: fixed-parameter interpolated quantile estimators, plug--in estimators of fixed quantiles, and a new continuum of mean--estimating procedures.

For all three families we establish consistency and asymptotic normality via a uniform asymptotic equicontinuity argument. The limiting variances admit closed forms, allowing a transparent comparison of efficiency across families and interpolation levels. A geometric decomposition of the parameter space shows that, for fixed quantile level $\tau$, admissible pairs $(z, h)$ lie on straight lines along which the estimator targets the same population quantile while its asymptotic variance evolves.

The theoretical analysis reveals two efficiency regimes. Under light--tailed distributions (e.g., Gaussian), interpolation yields a monotone variance reduction. Under heavy--tailed distributions (e.g., Laplace), a finite interpolation parameter $h^{*}(\tau) > 0$ strictly improves efficiency for quantile estimation. Numerical experiments---based on simulated data and real financial returns---validate these conclusions and show that, both asymptotically and in finite samples, the mean--estimating family does not improve upon the sample mean.

\medskip
\noindent\textbf{Keywords:} quantile estimation; interpolation; M-estimation; asymptotic normality; efficiency; robust inference.
\end{abstract}
\clearpage
\section{Introduction}

The classical trade-off between efficiency and robustness is a fundamental theme in statistics, dating back to the influential work of \citet{Huber1981} and \citet{Hampel2005}. 
Two canonical estimators illustrate this contrast:
the sample mean, efficient under light-tailed distributions but highly sensitive to outliers, and sample quantiles, which remain robust under heavy-tailed or contaminated data but may exhibit reduced efficiency under Gaussian-like models.

Let \(Y\) be a real-valued random variable with mean \(m=\mathbb{E}[Y]\), finite variance,
and continuous distribution function \(F\) with density \(f\).
Given observations \(Y_1,\dots,Y_n\), the sample mean
\[
\bar Y = \frac{1}{n}\sum_{i=1}^n Y_i
\]
is the minimizer of the empirical quadratic loss 
\[
q \mapsto \frac{1}{2n}\sum_{i=1}^n (q - Y_i)^2.
\]
In contrast, the sample quantile of order \(\tau\in(0,1)\),
\[
\hat q(\tau)=\inf\{y:\hat F(y)\ge \tau\},
\]
minimizes the empirical asymmetric absolute deviation:
\[
q \mapsto \hat M(q;z)
= \frac{1}{n}\sum_{i=1}^n m(q-Y_i;z),
\qquad
m(q-Y_i;z)=|q-Y_i| + z(q-Y_i),
\]
with the correspondence \( \tau = (1-z)/2 \) \citep{Koenker1978}.
Thus, means and quantiles arise from two classical loss functions, quadratic and absolute, leading to complementary strengths and weaknesses.

\medskip

\paragraph{Interpolation between quantiles and the mean.}
Motivated by the Gibbs-measure formulation of \citet{Dermoune2017} and by regularization ideas in statistical learning 
\citep{Tibshirani1996,Zou2006},
we introduce the following two-parameter objective function:
\[
\hat M(q;z,h)
= \frac{1}{n}\sum_{i=1}^n m(q-Y_i;z,h),
\qquad
m(q-Y_i;z,h)
= |q-Y_i| + z(q-Y_i) + \frac{h}{2}(q-Y_i)^2,
\]
where \(z\in(-1,1)\) and \(h\ge0\).
We denote by
\[
\hat q(z,h)=\arg\min_{q\in\mathbb{R}} \hat M(q;z,h)
\]
the corresponding estimator.  
When \(h=0\), we recover the classical sample quantile of order \(\tau=(1-z)/2\);  
as \(h\to\infty\), for any fixed \(z\), the minimizer converges to the sample mean \(\bar Y\).
This family therefore provides a continuous interpolation between robust quantile estimation and efficient mean estimation.

Beyond this interpolation, the estimator exhibits structural similarities with
trimmed means \citep{Huber2009}, expectiles \citep{Newey1987}, extremiles \citep{Daouia2018},
and more general quantile-like functionals \citep{Bellini2014,Waltrup2014}.
However, the role of the interpolation parameter \(h\) is fundamentally different:
it controls a controlled deformation of the quantile loss toward the quadratic loss, thereby producing a tunable efficiency–robustness balance.

\medskip

\paragraph{Population characterization.}
Let \(q(z,h)\) denote the population minimizer of
\(
M(q;z,h)=\mathbb{E}[m(q-Y;z,h)].
\)
A direct calculation shows that \(q(z,h)\) solves
\begin{equation}
F(q(z,h)) + \frac{h}{2} q(z,h)
= \frac{1 - z + h m}{2}.
\label{eq:population-eqn}
\end{equation}
For each fixed \(z\), the mapping \(h\mapsto q(z,h)\) is monotone and connects the population quantile of order \((1-z)/2\) to the mean \(m\), while for fixed \(h\ge0\), the map \(z\mapsto q(z,h)\) is strictly decreasing.

Equation \eqref{eq:population-eqn} reveals a simple geometric structure:
for every quantile level \(\tau\), all pairs \((z,h)\) satisfying
\[
F(q(z,h))=\tau
\]
lie on a \emph{straight line} in the \((z,h)\)-plane.  
Specifically, setting \(F(q(z,h))=\tau\) in \eqref{eq:population-eqn} gives
\[
\tau = \frac{1-z + h\,(m-q(\tau))}{2},
\]
or equivalently
\[
z = 1 - 2\tau + h\bigl(m - q(\tau)\bigr).
\]
Thus, for fixed \(\tau\), the admissible pairs form the line
\((z(\tau,h), h)\) with \(h\ge0\) and \(z(\tau,h)\) as above.
Along this line, the target quantile remains fixed at \(q(\tau)\) while the corresponding sample estimator changes its asymptotic variance with \(h\). This geometric structure underlies the efficiency--robustness trade-offs analysed in Section~\ref{sec:efficiency}.

\medskip

\paragraph{The Plug--in estimator of a fixed quantile.}
For a fixed target quantile level \(\tau\in(0,1)\), the pair \((z,h)\) that satisfies \eqref{eq:population-eqn} with \(F(q(z,h))=\tau\) is given by
\begin{equation}
z(\tau,h)=1-2\tau + h\bigl(m - F^{-1}(\tau)\bigr). \label{eq:z-tau-h}
\end{equation}
Replacing population quantities by empirical ones yields
\[
\hat z(\tau,h)=1-2\tau + h(\bar Y - \hat q(\tau)).
\]
This fully implementable estimator adds an additional layer of randomness.
A complete Central Limit Theorem for the Plug--in estimator $\hat q(\hat z(\tau,h),h)$, including the covariance corrections induced by estimating \(m\) and \(q(\tau)\), is established in Appendix~\ref{app:plugin-clt}.

\noindent
In addition, we derive in Appendix~\ref{app:mean-clt} a third Central Limit Theorem, devoted to the mean-estimating family $\{\hat q(\hat z,h)\}$, enabling a direct comparison with the sample mean $\bar Y$.

\medskip

\paragraph{Estimating the population mean through the new family.}

A basic property of the population mapping $(z,h)\mapsto q(z,h)$ is that,
for each fixed $z\in(-1,1)$,
\[
q(z,h)\;\longrightarrow\; m \qquad \text{as } h\to\infty.
\]
Hence, for large $h$, the estimator $\hat q(z,h)$ provides a consistent estimator
of the population mean $m$.
This gives a first link between our framework and mean estimation.

A second, fully data-driven construction arises from targeting the value
$z_m = 1 - 2F(m)$, for which $q(z_m,h)=m$ for all $h\ge 0$.
Replacing population quantities by empirical counterparts yields
\[
\hat z = 1 - 2\hat F(\bar Y) + h\bigl(\bar Y - \hat q(\hat F(\bar Y))\bigr).
\]
For any $h\ge 0$, the estimator
\[
\hat q(\hat z,h)
\]
is then a consistent estimator of the population mean $m$.
When $h=0$, this reduces to the sample quantile of order $\hat F(\bar Y)$,
namely $\hat q(\hat F(\bar Y))$.

This construction enlarges the interpretability of our family:
\[
\{\hat q(\hat z,h) : h\ge 0\}
\]
constitutes a continuum of consistent estimators of $m$, interpolating
between the empirical quantile at level $\hat F(\bar Y)$ ($h=0$) and the
sample mean $\bar Y$ (as $h\to\infty$).
A dedicated Central Limit Theorem for this third family of estimators is
established in Appendix~\ref{app:mean-clt}, showing that $\hat q(\hat z,h)$ is
first-order equivalent to $\bar Y$ for every $h\ge 0$.

\medskip

\paragraph{Contributions.}
The main contributions of this paper are:
\begin{enumerate}[label=(\roman*)]
\item A unified two-parameter family \(\hat q(z,h)\) interpolating continuously between quantiles and the sample mean.
\item A complete population characterization, including the linear geometric structure of constant-quantile sets \((z,h)\).
\item Consistency and asymptotic normality for each fixed pair \((z,h)\), with explicit variance expressions.
\item An efficiency analysis showing monotone variance reduction under Gaussian models and non-monotone behaviour, with a finite optimal interpolation level \(h^*(\tau)>0\), under Laplace models.
\item A full Plug--in theory for estimating a fixed quantile: definition, consistency, and a corrected Central Limit Theorem.
\item Numerical experiments, with simulated and real data, validating the theoretical predictions.
\end{enumerate}

The remainder of the paper develops the theory of the interpolated estimator family and evaluates its empirical performance.
Section \ref{sec:family} introduces the unified two--parameter family and its basic properties.
Section \ref{sec:consistency} establishes consistency.
Section \ref{sec:behavior} studies the monotonicity properties of the population mapping and provides practical guidance for parameter selection.
Section \ref{sec:asymptotics} presents the three central limit theorems corresponding to the three estimator classes.
Section \ref{sec:geometry} analyses the geometric structure of the parameter space and the resulting efficiency trade-offs.
Section \ref{sec:numerical_validation} assembles all numerical experiments for Classes (2) and (3).
Section \ref{sec:conclusion} concludes.

\section{A Family of Interpolated Quantile Estimators}
\label{sec:family}

This section introduces the unified two–parameter family of estimators \(\hat q(z,h)\) that underlies the three classes analysed in the next sections.
The construction highlights how a simple quadratic regularization of the classical quantile loss generates a continuum of M-estimators, from the standard sample quantile \((h=0)\) to the sample mean \((h\to\infty)\).
This formulation provides the common framework needed for the asymptotic theory developed in Sections \ref{sec:consistency}–\ref{sec:asymptotics}.

\subsection{Classical Quantiles as Minimizers of a Convex Loss}

Quantiles admit a well-known variational characterization as minimizers of a convex,
piecewise-linear objective. Let $\tau\in(0,1)$ be a fixed quantile level. The sample
quantile
\[
\hat q(\tau)=\inf\{y:\hat F(y)\ge \tau\}
\]
is the unique minimizer (under mild regularity) of the empirical loss
\[
q\mapsto \hat M(q;z)=\frac{1}{n}\sum_{i=1}^n m(q-Y_i;z),
\qquad
m(u;z)=|u|+z\,u,
\]
with the correspondence $\tau=(1-z)/2$ \citep{Koenker1978}. This formulation places
quantiles squarely within the general theory of M-estimation and offers a convenient
starting point for regularization-based extensions.

Under standard regularity assumptions \citep{Koenker1978}, the sample quantile satisfies the classical asymptotic expansion
\[
\sqrt{n}\bigl(\hat q(\tau)-q(\tau)\bigr)
\;\xrightarrow{d}\;
\mathcal N\!\left(0,\frac{\tau(1-\tau)}{f(q(\tau))^2}\right),
\]
where $q(\tau)=F^{-1}(\tau)$ denotes the population quantile. This limit theory will serve
as a benchmark when analysing the behaviour of the interpolated family $\hat q(z,h)$.

\subsection{Quadratic Regularisation as an Interpolation Device}
To construct an estimator interpolating continuously between quantiles and the mean,
we introduce a quadratic regularization into the objective function:
\[
\hat M(q;z,h)
= \frac{1}{n}\sum_{i=1}^n m(q-Y_i;z,h),
\qquad
m(u;z,h) = |u| + z u + \frac{h}{2}u^2,
\]
with $h\ge0$ and where $u = q - Y_i$.
Differentiating the objective yields the empirical score function:
\[
\frac{\partial}{\partial q} \hat M(q;z,h)
= \frac{1}{n}\sum_{i=1}^n \left[\operatorname{sgn}(q-Y_i) + z + h(q-Y_i)\right]
= \hat\Psi(q;z,h),
\]
confirming that our definition of $m$ is consistent with the canonical score function.
\subsection{Canonical Definition of the Score Function}
\label{subsec:score-definition}
We establish a fixed convention for the score function that will be used throughout the paper.

For parameters $z \in (-1,1)$ and $h \ge 0$, define
\[
\psi(t;z,h) = \operatorname{sgn}(t) + h t + z, \qquad t \in \mathbb{R}.
\]
\textbf{Consistency:} With this definition, the loss function whose derivative is $\psi$ is
\[
m(u;z,h) = |u| + z u + \frac{h}{2}u^2,
\]
so that $\frac{\partial}{\partial q} m(q-Y_i;z,h) = \psi(q-Y_i;z,h)$.
The empirical score function is then
\[
\hat\Psi(q;z,h) = \frac{1}{n}\sum_{i=1}^n \psi(q-Y_i;z,h) = \frac{1}{n}\sum_{i=1}^n \bigl[\operatorname{sgn}(q-Y_i) + h(q-Y_i) + z\bigr].
\]
Using the identity $\operatorname{sgn}(q-Y_i) = 1 - 2\cdot\mathbf{1}_{\{Y_i \le q\}}$, we obtain the equivalent but often more convenient form
\begin{equation}
\hat\Psi(q;z,h) = 2\hat F(q) - 1 + z + h(q - \bar Y).
\label{eq:empirical-score}
\end{equation}
The population score function is
\[
\Psi(q;z,h) = \mathbb{E}[\psi(q-Y;z,h)] = 2F(q) - 1 + z + h(q - m).
\]
All subsequent expressions involving $\psi$ or $\Psi$ adhere to these definitions.

Three classes of estimators are derived from this framework:
\begin{itemize}
\item \textbf{Class (1): Fixed parameters.} For any fixed pair $(z,h)$ with $z\in(-1,1)$ and $h\ge 0$, 
      $\hat q(z,h)$ provides an interpolated quantile estimator. When $h=0$, this reduces to the classical 
      sample quantile of order $\tau=(1-z)/2$.

\item \textbf{Class (2): Plug--in for fixed quantile.} For a target quantile level $\tau$, 
      the data-driven parameter $\hat z(\tau,h)=1-2\tau+h(\bar Y-\hat q(\tau))$ yields the 
      estimator $\hat q(\hat z(\tau,h),h)$, targeting $q(\tau)$.

\item \textbf{Class (3): Mean-estimating family.} Using $\hat z = 1-2\hat F(\bar Y)+h(\bar Y-\hat q(\hat F(\bar Y)))$, 
      the estimator $\hat q(\hat z,h)$ targets the population mean $m$.
\end{itemize}

\paragraph{Connection to related work.}
Thus, the parameter \(h\) governs a continuous transition between robustness (small \(h\)) 
and efficiency (large \(h\)), producing a continuum of estimators between the median 
and the mean. This idea parallels regularization principles used in statistical 
learning \citep{Tibshirani1996, Zou2006}, but with a novel focus on the 
quantile–mean trade-off.

\noindent
Several alternative approaches have been developed to interpolate between the 
classical $L_1$ and $L_2$ loss functions. In particular, \citet{Newey1987} introduced 
\emph{asymmetric least squares} estimation, leading to \emph{expectiles}, which 
continuously interpolate between the mean and quantiles depending on an asymmetry 
parameter. Further developments include \emph{extremiles} \citep{Daouia2018}, 
which extend expectiles toward heavy-tailed regimes, and other families of 
generalized quantile functionals \citep{Bellini2014,Waltrup2014}. These approaches 
highlight the continuous spectrum between robustness and efficiency obtained by 
varying asymmetry or tail-sensitivity parameters.

\noindent
Our framework differs in that the interpolation is driven by a 
\emph{interpolation parameter} \(h\) applied directly to the loss, rather than by 
asymmetric weighting of residuals. This yields a distinct yet complementary 
mechanism for transitioning between quantile-based and mean-based estimation.

\noindent
In the framework of robust statistics, both the sample mean and sample quantiles can 
be viewed as special cases of M-estimators associated with different loss functions 
\citep[see, e.g.,][]{Huber2009}. Building on this connection, our interpolated family 
provides a unified M-estimation formulation retaining the robustness properties of 
quantiles while gaining efficiency through quadratic regularization.

\subsection{Implementable Estimator via Plug--in}

For Class (2), the theoretical mapping \(q(z(\tau,h),h)\) depends on unknown population quantities.
Replacing these by their empirical counterparts yields the implementable Plug--in estimator
\[
\hat q(\hat z(\tau,h),h),
\]
whose asymptotic behaviour includes additional covariance terms arising from the estimation of \(\bar Y\) and \(\hat q(\tau)\).
Its central limit theorem appears in Section \ref{sec:asymptotics}.

\section{Consistency} 
\label{sec:consistency}

We now establish consistency for the full two-parameter family \(\hat q(z,h)\).
Since all three estimator classes arise as specialisations of this family, a single uniform consistency result suffices for the entire framework.
From the law of large numbers we have almost surely 
\(\hat M(q;z,h)\to M(q;z,h)\) and \(\hat \Psi(q;z,h)\to \Psi(q;z,h)\)
for each \(q, z\in\mathbb{R}\) and \(h\geq 0\) 
as \(n\to +\infty\), with 
\[
M(q;z,h)=\mathbb{E}[m(q-Y;z,h)], \qquad 
\Psi(q;z,h)=\mathbb{E}[\psi(q-Y;z,h)].
\]
The minimizer \(q(z,h)\) of \(q\mapsto M(q;z,h)\) satisfies the first-order 
condition \(\Psi(q;z,h)=\mathbb{E}[\psi(q-Y;z,h)]=0\). 
A similar calculation as in a finite sample shows that 
\[
q(z,h)=\left(F+\tfrac{h}{2}I_{\mathbb{R}}\right)^{-1}
\!\left(\tfrac{1-z+hm}{2}\right),
\]
or equivalently
\[
F(q(z,h))+\tfrac{h}{2}q(z,h)=\tfrac{1-z+hm}{2}.
\]
If \(z\in (-1,1)\), then 
\(h\mapsto q(z,h)\) varies between \(q(z,0)\), the population quantile of order 
\(\tfrac{1-z}{2}\), and the population mean \(m\).

As \(q\mapsto \hat \Psi(q;z,h)\) is nondecreasing, standard M-estimation
arguments ensure that the sample minimizer converges almost surely:
\[
\hat q(z,h)\to q(z,h) \quad \text{a.s.}
\qquad \text{\citep[Lemma~5.10]{Vaart1998}.}
\]

\medskip
\noindent
We next investigate in detail how the population function \(q(z,h)\)
varies with respect to its parameters \(z\) and \(h\),
and how the interpolation parameter \(h\) drives the transition from robustness
to efficiency.

\section{Behaviour of \texorpdfstring{$q(z,h)$}{q(z,h)} and Some Practical Implications}
\label{sec:behavior}
Before turning to asymptotic normality, we study the population mapping \(q(z,h)\).
Its monotonicity properties with respect to \(z\) and \(h\) determine both the geometry of the parameter space and the variance–bias trade-offs appearing in the CLTs and efficiency results that follow.
Recall that $q(z,h)$ is defined implicitly by the first-order condition
\[
F(q(z,h)) + \frac{h}{2} q(z,h) = \frac{1 - z + h m}{2},
\]
where $F$ denotes the cumulative distribution function of $Y$, and $m = \mathbb{E}[Y]$ is the population mean.

Differentiating both sides of this identity with respect to $h$ yields
\[
\frac{\partial q}{\partial h}(z,h)
= \frac{m - q(z,h)}{2f(q(z,h)) + h},
\]
where $f$ is the probability density function of $Y$.
It follows that for each fixed $z \in (-1,1)$:
\begin{itemize}
  \item If $q(z,0) < m$, then $h \mapsto q(z,h)$ increases monotonically from $q(z,0)$ to $m$.
  \item If $q(z,0) > m$, then $h \mapsto q(z,h)$ decreases monotonically from $q(z,0)$ to $m$.
\end{itemize}
Hence, as $h$ grows, the population quantity $q(z,h)$ continuously moves toward the mean $m$,
providing a continuous interpolation between quantile-based estimation ($h=0$) and mean-based estimation ($h \to \infty$).

\medskip
\noindent\textbf{Empirical counterpart.}
The same monotonicity properties hold for the empirical estimator $\hat q(z,h)$: 
for each fixed $z$, as $h$ increases from $0$, 
\begin{itemize}
    \item if $\hat q(z,0) < \bar Y$, then $h \mapsto \hat q(z,h)$ increases monotonically 
          from $\hat q(z,0)$ toward $\bar Y$;
    \item if $\hat q(z,0) > \bar Y$, then $h \mapsto \hat q(z,h)$ decreases monotonically 
          from $\hat q(z,0)$ toward $\bar Y$.
\end{itemize}
This empirical preservation of monotonicity is verified numerically in 
Section~\ref{subsec:monotonicity} (see Figure~\ref{fig:monotonicity}) and 
in the real-data application of Section~\ref{subsec:realdata}.

\medskip
\noindent
Differentiating instead with respect to $z$ gives
\[
\frac{\partial q}{\partial z}(z,h)
= -\frac{1}{2f(q(z,h)) + h}.
\]
This implies that for each fixed $h>0$, the function $z \mapsto q(z,h)$ is strictly decreasing,
with $q(z,h) \to +\infty$ as $z \downarrow -1$ and $q(z,h) \to -\infty$ as $z \uparrow 1$.
Hence, it is sufficient to consider $z \in (-1,1)$.

\medskip
\noindent\textbf{Practical implications.} 
As a consequence of these results we show how to estimate quantiles using 
our estimators $\hat q(z,h)$:
\begin{itemize}
    \item If we want to estimate a quantile with small order we choose 
          $z$ and $h$ close respectively to $-1$ and $0$.
    \item If we want to estimate a quantile close to the mean 
          we choose $\hat q((1-z)/2) < \bar Y$ and $h$ large.
    \item If we want to estimate a quantile above the mean we choose 
          $\hat q((1-z)/2) > \bar Y$.
\end{itemize}

\section{Asymptotic Normality}
\label{sec:asymptotics}
We now derive the three central limit theorems corresponding to the three estimator classes introduced in Section \ref{sec:family}:
\begin{enumerate}[label=(\roman*)]
\item the interpolated quantile estimators $\hat q(z,h)$;
\item the plug--in estimators of a fixed quantile $\hat q(\hat z(\tau,h),h)$;
\item the mean--estimating family $\hat q(\hat z(h),h)$, where $\hat z(h)=1-2\hat F(\bar Y)+h\bigl(\bar Y- \hat q(\hat F(\bar Y))\bigr)$.
\end{enumerate}
The parameter \(h\) is always considered fixed (not shrinking with \(n\)).
All proofs, which rely on uniform asymptotic equicontinuity of the empirical process,
are postponed to the appendices.

\subsection*{Regularity Assumptions}

Throughout this section we impose the following conditions:
\begin{itemize}
  \item[(A1)] $F$ is continuous and $Y$ has a finite second moment;
  \item[(A2)] $Y$ admits a continuous, strictly positive density $f$ in a neighbourhood
  of all relevant points.
\end{itemize}

These conditions ensure the validity of Knight-type expansions, Z--estimator
linearisation, and the functional delta method used for the plug--in corrections.

\subsection{CLT for the Interpolated Quantile Estimator}
\label{subsec:clt1}

Consider the two--parameter estimator family
\[
\hat q(z,h)
=\arg\min_{q\in\mathbb{R}} \hat M(q;z,h),
\qquad
z\in(-1,1),\; h\ge 0,
\]
with population counterpart $q(z,h)$ defined by
$\Psi(q(z,h);z,h)=0$, where $\Psi$ is the population score.
\begin{theorem}[CLT for the interpolated quantile estimator]
\label{thm:CLT}
Under Assumptions (A1)--(A2),
\[
\sqrt{n}\bigl(\hat q(z,h)-q(z,h)\bigr)
\;\xrightarrow{d}\;
\mathcal{N}\!\left( 0,\;\frac{B(z,h)}{\bigl( 2f(q(z,h))+h \bigr)^2} \right),
\]
where
\[
B(z,h)
=4F(q(z,h))(1-F(q(z,h)))
+2h\Bigl[\mathbb{E}|Y-q(z,h)|
 -(m-q(z,h))(1-2F(q(z,h)))\Bigr]
+h^2\Var(Y).
\]
\end{theorem}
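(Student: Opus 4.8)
\medskip
\noindent\textbf{Proof strategy.}
The plan is to treat $\hat q(z,h)$ as a Z-estimator, that is, as an approximate root of the nondecreasing empirical score $q\mapsto\hat\Psi(q;z,h)=2\hat F(q)-1+z+h(q-\bar Y)$, whose population counterpart $\Psi(\cdot;z,h)$ vanishes at $q(z,h)$. Consistency $\hat q(z,h)\to q(z,h)$ a.s.\ is already available from Section~\ref{sec:consistency}, so only the $\sqrt n$-rate and the Gaussian limit remain to be established. Convexity of $q\mapsto\hat M(q;z,h)$ (strict when $h>0$) will be used to control the fact that $\hat\Psi(\hat q;z,h)$ vanishes only up to a jump of order $n^{-1}$; equivalently, one may run the whole argument through the standard argmin lemma for convex stochastic processes applied to $t\mapsto n\bigl[\hat M(q(z,h)+t/\sqrt n;z,h)-\hat M(q(z,h);z,h)\bigr]$, expanded by a Knight-type identity.

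First I would establish a CLT for the score at the truth. Writing $q=q(z,h)$, the quantity $\hat\Psi(q;z,h)=\tfrac1n\sum_{i=1}^n\psi(q-Y_i;z,h)$ is an i.i.d.\ average whose summand $\psi(q-Y;z,h)=\operatorname{sgn}(q-Y)+h(q-Y)+z$ is centered, since $\mathbb E[\psi(q-Y;z,h)]=\Psi(q;z,h)=0$. Hence, under (A1), the ordinary CLT gives $\sqrt n\,\hat\Psi(q;z,h)\xrightarrow{d}\mathcal N(0,\sigma^2)$ with $\sigma^2=\Var\bigl(\operatorname{sgn}(q-Y)+h(q-Y)\bigr)$. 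A routine moment calculation---decomposing this variance into $\Var(\operatorname{sgn}(q-Y))=4F(q)(1-F(q))$, $h^2\Var(Y)$, and twice the cross-covariance, and evaluating $\mathbb E[\operatorname{sgn}(q-Y)\,Y]$ via the identity $\operatorname{sgn}(q-Y)\,Y=q\,\operatorname{sgn}(q-Y)-|q-Y|$---reproduces exactly $\sigma^2=B(z,h)$.

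Next I would linearise. With $q=q(z,h)$, write
\[
0=\hat\Psi(\hat q;z,h)+O_P(n^{-1})
=\hat\Psi(q;z,h)+\bigl[(\hat\Psi-\Psi)(\hat q)-(\hat\Psi-\Psi)(q)\bigr]+\bigl[\Psi(\hat q)-\Psi(q)\bigr].
\]
Because $\hat\Psi-\Psi=2(\hat F-F)+h(m-\bar Y)$ and the last summand does not depend on $q$, the middle bracket equals $2\bigl[(\hat F-F)(\hat q)-(\hat F-F)(q)\bigr]$, which is $o_P(n^{-1/2})$ by stochastic equicontinuity of the classical empirical process combined with $\hat q\xrightarrow{P}q$. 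The last bracket equals $(2f(q)+h)(\hat q-q)+o(|\hat q-q|)$ by differentiability of $\Psi$ at $q$, with (A2) also ensuring $2f(q)+h>0$. Substituting and multiplying by $\sqrt n$ gives $\bigl(2f(q)+h+o_P(1)\bigr)\sqrt n(\hat q-q)=-\sqrt n\,\hat\Psi(q;z,h)+o_P(1)$, which first forces $\sqrt n(\hat q-q)=O_P(1)$ and then, by Slutsky,
\[
\sqrt n\bigl(\hat q(z,h)-q(z,h)\bigr)=\frac{-\sqrt n\,\hat\Psi(q(z,h);z,h)}{2f(q(z,h))+h}+o_P(1)\xrightarrow{d}\mathcal N\!\left(0,\frac{B(z,h)}{\bigl(2f(q(z,h))+h\bigr)^2}\right).
\]

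The main obstacle is the stochastic equicontinuity step: showing $\sqrt n\bigl[(\hat F-F)(\hat q)-(\hat F-F)(q)\bigr]=o_P(1)$ once $\hat q\xrightarrow{P}q$. This is precisely where the ``uniform asymptotic equicontinuity of the empirical process'' invoked in Section~\ref{sec:asymptotics} enters, e.g.\ through a maximal inequality for the oscillation of $\hat F$ over shrinking neighbourhoods of $q(z,h)$ (the indicator class $\{\mathbf 1\{\cdot\le q'\}\}$ being Donsker), or, in the convexity route, through tightness of the rescaled objective after the Knight expansion. The remaining ingredients---the scalar CLT for the score, the variance algebra, and the one-line linearisation---are routine given (A1)--(A2) and the consistency already proved.
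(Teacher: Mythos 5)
Your argument is correct, and it reaches the paper's linear representation
\[
\sqrt n\bigl(\hat q(z,h)-q(z,h)\bigr)
=-\frac{\sqrt n\,\hat\Psi\bigl(q(z,h);z,h\bigr)}{2f(q(z,h))+h}+o_P(1)
\]
by a somewhat different route. You work at the level of the score (Z-estimation): you use monotonicity of $q\mapsto\hat\Psi(q;z,h)$ to get $\hat\Psi(\hat q;z,h)=O_P(n^{-1})$, split the increment into the empirical-process part $2\bigl[(\hat F-F)(\hat q)-(\hat F-F)(q)\bigr]$ (killed by Donsker equicontinuity plus consistency, after noting that the $h(m-\bar Y)$ term is constant in $q$) and the deterministic part $\Psi(\hat q)-\Psi(q)=(2f(q)+h)(\hat q-q)+o(|\hat q-q|)$, and then extract the rate and the limit by Slutsky. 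The paper instead works at the level of the criterion: it applies Knight's identity to $\hat M(q(z,h)+t/\sqrt n;z,h)-\hat M(q(z,h);z,h)$, uses the same equicontinuity to obtain a uniform quadratic expansion in $t$, and identifies $\sqrt n(\hat q-q)$ with the argmin of the limiting quadratic before invoking Lindeberg--Feller for the score. The two proofs share their essential ingredients (equicontinuity of $\sqrt n(\hat F-F)$ on the indicator class, the CLT for $\hat\Psi$ at the truth, and the variance algebra giving $B(z,h)$, which your covariance decomposition via $\operatorname{sgn}(q-Y)\,Y=q\,\operatorname{sgn}(q-Y)-|q-Y|$ reproduces correctly); what differs is the bookkeeping. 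Your score-equation route is shorter and avoids Knight's identity and the convex argmin lemma, needing only the near-root property of the monotone empirical score; the paper's criterion expansion has the advantage of being stated uniformly over $(z,h)$ in compacts and over bounded $t$, which is then recycled in Appendices~\ref{app:plugin-clt} and \ref{app:mean-clt} where the parameter $z$ is random, a uniformity you would need to add explicitly if you wanted your linearisation to serve those later results as well.
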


The proof is given in Appendix~\ref{app:clt1} and is based on uniform
asymptotic equicontinuity of the empirical process
$\sqrt{n}\,(\hat F - F)$ combined with Knight’s identity.
Uniform asymptotic equicontinuity ensures that replacing \(q(z,h)\) by its estimator \(\hat q(z,h)\) inside the indicator process produces an \(o_p(1)\) remainder uniformly over bounded neighbourhoods.

\medskip
\noindent\textbf{Remark.}
When \(h = 0\), the asymptotic variance reduces to \(\tau(1-\tau)/f(q(\tau))^2\), 
the classical variance of the sample quantile. For \(h > 0\), the quadratic 
regularization introduces additional terms that blend quantile-like and mean-like 
behaviour, leading to the variance expression \(B(z,h)/(2f(q(z,h))+h)^2\). 
This hybrid structure enables the efficiency--robustness trade-off analysed in 
Section~\ref{sec:efficiency}.

\begin{corollary}[Variance along constant-quantile lines]
\label{cor:hstar}
For a fixed quantile level $\tau\in(0,1)$, consider the parameter pairs $(z(\tau,h),h)$ that satisfy $F(q(z(\tau,h),h))=\tau$, where
\[
z(\tau,h)=1-2\tau+h\bigl(m-F^{-1}(\tau)\bigr).
\]
Along this line, the asymptotic variance from Theorem~\ref{thm:CLT} simplifies to
\[
v(\tau,h)=\frac{B(z(\tau,h),h)}{\bigl(2f(F^{-1}(\tau))+h\bigr)^2},
\]
where $B(z,h)$ is defined in Theorem~\ref{thm:CLT} and reduces to
\[
A(\tau,h)=4\tau(1-\tau)+2h\!\left[\mathbb{E}|Y-F^{-1}(\tau)|-(m-F^{-1}(\tau))(1-2\tau)\right]+h^2\Var(Y).
\]

Let
\[
a=4\tau(1-\tau),\quad 
b=2\!\left[\mathbb{E}|Y-F^{-1}(\tau)|-(m-F^{-1}(\tau))(1-2\tau)\right],\quad 
c=\Var(Y),\quad 
d=2f(F^{-1}(\tau)).
\]
Then
\[
\frac{\partial v(\tau,h)}{\partial h}
=\frac{(2cd-b)h+(bd-2a)}{(d+h)^3}.
\]
According to the signs of $bd-2a$ and $2cd-b$, the following cases arise:
\begin{enumerate}
    \item[(a)] If $bd-2a>0$ and $2cd-b\ge0$, then $v(\tau,0)<v(\tau,h)$ for all $h>0$.
    \item[(b)] If $bd-2a<0$ and $2cd-b\ge0$, then $v(\tau,h)<v(\tau,0)$ for all $h>0$.
          In this case, every $h>0$ improves efficiency, and $\arg\min_{h>0}v(\tau,h)=+\infty$.
    \item[(c)] If $(bd-2a)(2cd-b)<0$, there exists $h^*(\tau)>0$ such that
          $v(\tau,h^*(\tau))<v(\tau,0)$, giving a finite optimal interpolation level.
\end{enumerate}
\end{corollary}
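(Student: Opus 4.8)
The plan is to reduce the statement to a one-variable calculus exercise after identifying what the estimator targets along the line.

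\textbf{Step 1 (identifying the target along the line).} First I would check that $q\bigl(z(\tau,h),h\bigr)=F^{-1}(\tau)$ for every $h\ge0$. Substituting $q=F^{-1}(\tau)$ into the population first-order condition $F(q)+\tfrac h2q=\tfrac{1-z+hm}{2}$ yields exactly $z=1-2\tau+h\bigl(m-F^{-1}(\tau)\bigr)=z(\tau,h)$, and since Section~\ref{sec:behavior} shows $z\mapsto q(z,h)$ is strictly decreasing (hence injective) for each fixed $h$, this root is unique. Consequently $F\bigl(q(z(\tau,h),h)\bigr)=\tau$ identically in $h$.

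\textbf{Step 2 (specializing Theorem~\ref{thm:CLT}).} Writing $q(\tau):=F^{-1}(\tau)$ and inserting $F(q(\tau))=\tau$ into the limiting variance of Theorem~\ref{thm:CLT}: the denominator $2f(q(z,h))+h$ becomes $d+h$ with $d=2f(q(\tau))$; in $B(z,h)$ the first term collapses to $4\tau(1-\tau)=a$, the coefficient of $2h$ becomes $\mathbb E|Y-q(\tau)|-(m-q(\tau))(1-2\tau)=b/2$, and the quadratic term is $h^2\Var(Y)=ch^2$. This gives $B\bigl(z(\tau,h),h\bigr)=ch^2+bh+a=A(\tau,h)$ and hence $v(\tau,h)=\dfrac{ch^2+bh+a}{(h+d)^2}$.

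\textbf{Step 3 (differentiating and sign analysis).} Applying the quotient rule and cancelling one factor of $(h+d)$ gives $\partial_h v(\tau,h)=\dfrac{(2ch+b)(h+d)-2(ch^2+bh+a)}{(h+d)^3}$; expanding the numerator, the $h^2$ contributions cancel and it reduces to the affine function $\ell(h)=(2cd-b)h+(bd-2a)$, i.e.\ $\partial_h v(\tau,h)=\ell(h)/(h+d)^3$. Since (A2) ensures $d=2f(q(\tau))>0$, the cube is strictly positive on $[0,\infty)$, so $\sgn\partial_h v(\tau,h)=\sgn\ell(h)$. The three cases are then read off from the intercept $\ell(0)=bd-2a$ and the slope $2cd-b$: if $\ell\ge0$ throughout (case~(a)) then $v$ is strictly increasing, so $v(\tau,0)<v(\tau,h)$; if the intercept is negative and $\ell$ stays nonpositive (case~(b)), $v$ strictly decreases, remaining below $v(\tau,0)$ with its infimum only approached as $h\to\infty$; if $\ell$ changes sign once with $\ell(0)<0$ (case~(c)), $v$ first decreases then increases, yielding a finite stationary point $h^*(\tau)$ at which $v(\tau,h^*(\tau))<v(\tau,0)$.

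I expect the substance of the work to be bookkeeping rather than analysis: carefully collapsing each term of $B(z,h)$ onto $a,b,c$ after setting $F(q(\tau))=\tau$---in particular keeping the sign of the $(1-2\tau)$ contribution straight---and then verifying the \emph{global} (not merely local) nature of the optimality claims in cases~(b) and~(c). The latter is most transparent once one records the boundary value $v(\tau,h)\to\Var(Y)=c$ as $h\to\infty$: combined with the single sign change of the affine $\ell$, this fixes the shape of $h\mapsto v(\tau,h)$ on all of $[0,\infty)$ and hence exactly where it lies below $v(\tau,0)=\tau(1-\tau)/f(q(\tau))^2$.
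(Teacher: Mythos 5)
Your proposal takes essentially the same route as the paper: the corollary is established there by exactly this computation --- identify $q(z(\tau,h),h)=F^{-1}(\tau)$ along the line, substitute $F(q)=\tau$ into $B(z,h)$ to obtain $A(\tau,h)=a+bh+ch^{2}$, apply the quotient rule to get $\partial_h v(\tau,h)=\bigl((2cd-b)h+(bd-2a)\bigr)/(d+h)^{3}$, and read the three regimes off the sign of the affine numerator. Your Steps 1--3 reproduce this, and your uniqueness argument in Step 1 (strict monotonicity of $z\mapsto q(z,h)$) is the same justification the paper invokes for the constant-quantile lines.

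One caveat concerns case (b). Your argument there assumes the numerator $\ell(h)=(2cd-b)h+(bd-2a)$ ``stays nonpositive,'' which forces the slope $2cd-b\le 0$; the corollary as printed assumes $2cd-b\ge 0$. Under the printed hypothesis with strict inequality, $\ell$ changes sign at the finite point $h_{0}=(2a-bd)/(2cd-b)$, so $v$ decreases and then increases, the minimizer is the finite $h_{0}$ rather than $+\infty$, and the claim $v(\tau,h)<v(\tau,0)$ for all $h>0$ is not guaranteed since $v(\tau,h)\to c$ as $h\to\infty$ and $c$ need not lie below $v(\tau,0)=a/d^{2}$. In other words, you have proved the mathematically coherent version of (b) (negative intercept, nonpositive slope), and the printed condition $2cd-b\ge 0$ in (b) looks like a sign slip --- that configuration with strict inequality belongs to case (c). You should flag this discrepancy explicitly rather than silently substituting the corrected hypothesis.
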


\subsection{CLT for the Plug--in Estimator of a Fixed Quantile}
\label{subsec:clt2}

Fix $\tau\in(0,1)$ and $h\ge 0$.  
The corresponding population and empirical parameters are
\[
z(\tau,h)=1-2\tau+h\bigl(m-F^{-1}(\tau)\bigr),
\qquad
\hat z(\tau,h)=1-2\tau+h\,(\bar Y-\hat q(\tau)).
\]
The plug--in estimator of the $\tau$--quantile is
\[
\hat q\bigl(\hat z(\tau,h),h\bigr).
\]
Compared with fixed-$z$ inference, this estimator introduces additional
variability through $(\bar Y,\hat q(\tau))$, whose joint effect is
handled via uniform asymptotic equicontinuity and the functional delta method.
The complete derivation appears in Appendix~\ref{app:plugin-clt}.
\begin{theorem}[CLT for the plug--in estimator of a fixed quantile]
\label{thm:plugin-clt}
Under Assumptions (A1)--(A2), for any fixed \(\tau \in (0,1)\) and \(h \ge 0\),
\[
\sqrt{n}\bigl(\hat q(\hat z(\tau,h),h) - q(\tau)\bigr)
\;\xrightarrow{d}\;
\mathcal{N}\!\left(0,\;\frac{\tau(1-\tau)}{f^2(q(\tau))}\right).
\]
In particular, the plug--in estimator \(\hat q(\hat z(\tau,h),h)\) is first-order 
equivalent to the ordinary sample quantile \(\hat q(\tau)\), and hence shares its 
asymptotic distribution.
\end{theorem}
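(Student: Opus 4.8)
The plan is to exploit an algebraic cancellation: substituting the data--driven parameter $\hat z(\tau,h)$ into the first--order condition removes the random term $h\bar Y$ entirely, so that the plug--in estimator becomes a trivial reparametrisation of the ordinary sample quantile $\hat q(\tau)$. Recall from \eqref{eq:empirical-score} that $\hat q(\hat z(\tau,h),h)$ satisfies $\hat\Psi(q;\hat z(\tau,h),h)=0$, i.e. $2\hat F(q)-1+\hat z(\tau,h)+h(q-\bar Y)=0$ (up to an $O(n^{-1})$ subgradient slack at a kink of $\hat F$). Inserting $\hat z(\tau,h)=1-2\tau+h(\bar Y-\hat q(\tau))$, the terms $\pm h\bar Y$ cancel and the equation collapses to
\[
\hat F\bigl(\hat q(\hat z(\tau,h),h)\bigr)+\tfrac h2\,\hat q(\hat z(\tau,h),h)=\tau+\tfrac h2\,\hat q(\tau).
\]
Writing $G_n(q)=\hat F(q)+\tfrac h2 q$, the plug--in estimator is thus the generalized inverse $\tilde q:=G_n^{-1}\!\bigl(\tau+\tfrac h2\hat q(\tau)\bigr)$. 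The key point is that $\hat q(\tau)$ itself \emph{almost} solves this: since $\hat F(\hat q(\tau)^-)\le\tau\le\hat F(\hat q(\tau))$ and the jump of $\hat F$ is $O(n^{-1})$ (a.s., under continuity of $F$), one has $G_n(\hat q(\tau))=\tau+\tfrac h2\hat q(\tau)+O(n^{-1})$, so $G_n(\tilde q)-G_n(\hat q(\tau))=O_p(n^{-1})$.

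First I would record consistency of $\tilde q$. The map $G_n$ is nondecreasing; by Glivenko--Cantelli $G_n\to G:=F+\tfrac h2\,\mathrm{id}$ uniformly a.s., while $\tau+\tfrac h2\hat q(\tau)\to\tau+\tfrac h2 q(\tau)=G(q(\tau))$ a.s.\ by consistency of $\hat q(\tau)$. Since (A2) makes $G$ strictly increasing near $q(\tau)$, a standard sandwiching argument for generalized inverses gives $\tilde q\to q(\tau)$ a.s.\ (this also follows from the general Z--estimator consistency invoked in Section~\ref{sec:consistency}). Consequently both $\tilde q$ and $\hat q(\tau)$ lie in an arbitrarily small neighbourhood of $q(\tau)$ for large $n$.

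Next I would upgrade this to the $n^{-1/2}$ rate by comparing $G_n$ at the two points. Decompose $G_n=G+(\hat F-F)$. The empirical--process increment $(\hat F-F)(\tilde q)-(\hat F-F)(\hat q(\tau))$ is $o_p(n^{-1/2})$ by stochastic equicontinuity of $\sqrt n(\hat F-F)$ over a shrinking neighbourhood of $q(\tau)$ — the same device as in the proof of Theorem~\ref{thm:CLT}, legitimate here because both arguments are consistent for $q(\tau)$. Differentiability of $F$ at $q(\tau)$ gives $G(q)-G(q(\tau))=\bigl(f(q(\tau))+\tfrac h2\bigr)(q-q(\tau))+o(|q-q(\tau)|)$, so subtracting the two evaluations yields
\[
\Bigl(f(q(\tau))+\tfrac h2\Bigr)\bigl(\tilde q-\hat q(\tau)\bigr)=O_p(n^{-1})+o_p(n^{-1/2})+o_p\bigl(|\tilde q-q(\tau)|+|\hat q(\tau)-q(\tau)|\bigr).
\]
Feeding in the $O_p(n^{-1/2})$ rates of $\tilde q-q(\tau)$ and $\hat q(\tau)-q(\tau)$ (the former obtained from this same display at the coarser scale, the latter classical) and using $f(q(\tau))+\tfrac h2>0$, this gives $\sqrt n(\tilde q-\hat q(\tau))=o_p(1)$. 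Combined with the classical expansion $\sqrt n(\hat q(\tau)-q(\tau))\xrightarrow{d}\mathcal N\!\bigl(0,\tau(1-\tau)/f^2(q(\tau))\bigr)$ recalled in Section~\ref{sec:family} and Slutsky's lemma, the theorem follows.

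The main obstacle is the last step: one must handle the $O(n^{-1})$ discontinuities of $\hat F$ at the two generalized inverses cleanly, and — because stochastic equicontinuity only delivers the $o_p(n^{-1/2})$ increment \emph{after} one knows $\tilde q-\hat q(\tau)=o_p(1)$ — the rate improvement has to come through a short bootstrapping step (first $\tilde q-q(\tau)=O_p(n^{-1/2})$, then reinsert). An alternative route, avoiding the generalized--inverse reformulation, is a stochastic expansion of $\hat q(\hat z(\tau,h),h)$ around $\hat q(z(\tau,h),h)$ in the $z$--argument, using $\partial_z q(z(\tau,h),h)=-1/\bigl(2f(q(\tau))+h\bigr)$ from Section~\ref{sec:behavior} and $\hat z(\tau,h)-z(\tau,h)=h\bigl[(\bar Y-m)-(\hat q(\tau)-q(\tau))\bigr]$: adding the influence function of $\hat q(z(\tau,h),h)$ from the proof of Theorem~\ref{thm:CLT} to the correction $\partial_z q(z(\tau,h),h)\cdot\sqrt n\,(\hat z(\tau,h)-z(\tau,h))$ and checking that the $\bar Y$--contributions cancel leaves precisely the influence function of $\hat q(\tau)$; this is more computational but gives the same conclusion.
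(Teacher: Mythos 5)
Your main route is correct, and it is genuinely different from the paper's proof. The paper proves the result by applying the uniform linearisation of Theorem~\ref{thm:CLT} at the random parameter $\hat z(\tau,h)$, establishing an increment lemma for the empirical score via the Donsker property, expanding $\sqrt n\bigl(\hat z(\tau,h)-z(\tau,h)\bigr)$ through the Bahadur representation of $\hat q(\tau)$ and the CLT for $\bar Y$, and then assembling an influence function $\Gamma$ in which all $h$- and $\bar Y$-dependent terms are shown to cancel, leaving $\Gamma(y)=f(q(\tau))^{-1}\bigl(\mathbf 1\{y\le q(\tau)\}-\tau\bigr)$. You instead observe that the cancellation is exact at the level of the estimating equation: inserting $\hat z(\tau,h)=1-2\tau+h(\bar Y-\hat q(\tau))$ into $2\hat F(q)-1+\hat z+h(q-\bar Y)=0$ removes $\bar Y$ identically, so $\hat q(\hat z(\tau,h),h)=G_n^{-1}\bigl(\tau+\tfrac h2\hat q(\tau)\bigr)$ with $G_n=\hat F+\tfrac h2\,\mathrm{id}$, and the problem reduces to showing $\sqrt n\bigl(\tilde q-\hat q(\tau)\bigr)=o_p(1)$ by comparing $G_n$ at the two points; your handling of the $O(n^{-1})$ subgradient slack, the consistency sandwich, the equicontinuity increment (valid once $\tilde q-\hat q(\tau)=o_p(1)$, which consistency supplies), and the absorption of the $o_p(|\tilde q-q(\tau)|)$ term are all sound, and $f(q(\tau))+\tfrac h2>0$ covers $h=0$ as well. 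What each approach buys: your argument is shorter and more transparent — it exhibits the structural reason for first-order equivalence (the plug--in estimator does not depend on $\bar Y$ at all, in finite samples), avoids the Bahadur representation and the influence-function algebra, and only needs the classical quantile CLT plus Slutsky; the paper's heavier route, by contrast, produces the general linearisation and influence-function machinery that it reuses verbatim for Theorem~\ref{thm:mean-clt-main}, where no such exact cancellation is available. Your sketched alternative route (expansion in the $z$-argument with $\partial_z q=-1/(2f+h)$ and cancellation of the $\bar Y$ contributions) is essentially the paper's proof.
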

This corrected variance includes the contribution of estimating $\bar Y$ and
$\hat q(\tau)$, and differs from the population parametrised variance even when
$h=0$.
\subsection{CLT for the Mean--Estimating Family}
\label{subsec:clt3}

Define the data--driven quantity
\[
\hat z = 1 - 2\,\hat F(\bar Y) + h\bigl(\bar Y - \hat q(\hat F(\bar Y))\bigr),
\]
which estimates $z_m = 1 - 2F(m)$, the value for which
$q(z_m,h) = m$ for all $h \ge 0$.
For each fixed $h \ge 0$, consider the estimator
\[
\hat q(\hat z,h).
\]
This yields a continuum of consistent estimators of $m$,
interpolating between the empirical quantile at level $\hat F(\bar Y)$
(when $h = 0$) and the sample mean (as $h \to \infty$).
The proof of the CLT relies on the same Z--estimator arguments and on uniform
asymptotic equicontinuity of the empirical process (Appendix~\ref{app:mean-clt}).

\begin{theorem}[CLT for the mean--estimating family]
\label{thm:mean-clt-main}
Assume (A1)--(A2). For every fixed \(h \ge 0\), define the data--driven level
\[
\hat z = 1 - 2\hat F(\bar Y) + h\bigl(\bar Y - \hat q(\hat F(\bar Y))\bigr),
\]
and consider the estimator \(\hat q(\hat z,h)\).
Then
\[
\sqrt{n}\bigl(\hat q(\hat z,h) - m\bigr)
\;\xrightarrow{d}\;
\mathcal{N}\!\bigl(0,\, \Var(Y)\bigr).
\]
In particular, for each fixed \(h \ge 0\), the estimator \(\hat q(\hat z,h)\) is
first--order equivalent to the sample mean and therefore asymptotically efficient
for estimating \(m\).
\end{theorem}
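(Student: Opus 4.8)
The plan is to treat $\hat q(\hat z,h)$ as a one-step Z-estimator with estimating equation $q\mapsto\hat\Psi(q;\hat z,h)$, to linearise it around the population solution $m=q(z_m,h)$, and then to exploit the algebraic cancellations forced by the definition of $\hat z$ so that the leading term turns out to be exactly $\sqrt n(\bar Y-m)$. First I would record the population facts: at $z_m=1-2F(m)$ one has $\Psi(m;z_m,h)=2F(m)-1+z_m+h(m-m)=0$, so $q(z_m,h)=m$ for every $h\ge 0$, while $\partial_q\Psi(q;z,h)=2f(q)+h$ equals $2f(m)+h>0$ at $q=m$ by (A2). Consistency $\hat z\xrightarrow{p}z_m$ follows from $\bar Y\xrightarrow{p}m$, the Glivenko--Cantelli theorem, continuity of $F$, and $\hat q(\hat F(\bar Y))\xrightarrow{p}m$; combined with the monotone-score consistency argument of Section~\ref{sec:consistency} (applied uniformly in $z$ near $z_m$) this yields $\hat q(\hat z,h)\xrightarrow{p}m$.

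Next I would carry out the linearisation. Writing $g_n(q)=\hat\Psi(q;\hat z,h)$ and $g(q)=\Psi(q;z_m,h)$, the only $q$-dependent part of $g_n-g$ is $2(\hat F(q)-F(q))$, so asymptotic equicontinuity of the classical empirical process $\sqrt n(\hat F-F)$ at $m$, differentiability of $g$ there, and the rate $\hat q(\hat z,h)-m=O_p(n^{-1/2})$ together give
\[
\sqrt n\bigl(\hat q(\hat z,h)-m\bigr)=-\frac{\sqrt n\,g_n(m)}{2f(m)+h}+o_p(1).
\]
Substituting the definition of $\hat z$ into $g_n(m)=\hat\Psi(m;\hat z,h)=2\hat F(m)-1+\hat z+h(m-\bar Y)$, the constants and the term $h(m-\bar Y)$ cancel against matching pieces of $\hat z$, leaving the compact form
\[
g_n(m)=2\bigl[\hat F(m)-\hat F(\bar Y)\bigr]+h\bigl[m-\hat q(\hat F(\bar Y))\bigr].
\]

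It then remains to expand these two pieces. For the first, I would write $\hat F(m)-\hat F(\bar Y)=-\bigl[(\hat F-F)(\bar Y)-(\hat F-F)(m)\bigr]-\bigl[F(\bar Y)-F(m)\bigr]$; the empirical-process increment is $o_p(n^{-1/2})$ by equicontinuity and $\bar Y\xrightarrow{p}m$, while $F(\bar Y)-F(m)=f(m)(\bar Y-m)+o_p(n^{-1/2})$, so $2[\hat F(m)-\hat F(\bar Y)]=-2f(m)(\bar Y-m)+o_p(n^{-1/2})$. For the second, $\hat q(\hat F(\bar Y))=\inf\{y:\hat F(y)\ge\hat F(\bar Y)\}$ is exactly the largest order statistic not exceeding $\bar Y$, so $0\le\bar Y-\hat q(\hat F(\bar Y))$ is bounded by the spacing between the order statistics straddling $\bar Y$, which is $O_p(n^{-1})$ at the point $m$ where the density is positive; hence $h[m-\hat q(\hat F(\bar Y))]=-h(\bar Y-m)+o_p(n^{-1/2})$. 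Adding, $\sqrt n\,g_n(m)=-(2f(m)+h)\sqrt n(\bar Y-m)+o_p(1)$, so the displayed linearisation collapses to $\sqrt n(\hat q(\hat z,h)-m)=\sqrt n(\bar Y-m)+o_p(1)$, and the conclusion follows from the ordinary CLT for $\bar Y$ and Slutsky's lemma.

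I expect the one genuinely nonroutine step to be the bound $\bar Y-\hat q(\hat F(\bar Y))=o_p(n^{-1/2})$: it rests on identifying $\hat q(\hat F(\bar Y))$ with an order statistic adjacent to $\bar Y$ and on the order-statistic-spacing estimate near a point of positive density (valid on an event of probability tending to one, excluding the degenerate case where $\bar Y$ exceeds or falls below all observations). The remaining ingredients --- empirical-process equicontinuity, Z-estimator linearisation, and the Taylor/delta-method expansions --- are precisely those already invoked for Theorems~\ref{thm:CLT} and~\ref{thm:plugin-clt}.
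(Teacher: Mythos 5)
Your proof is correct, and it takes a genuinely more direct route than the paper. The paper recycles the plug--in linearisation of Theorem~\ref{thm:plugin-clt} at the random level \(\hat\tau=\hat F(\bar Y)\): it expands around the random target \(q(z(\hat\tau,h),h)=F^{-1}(\hat\tau)\), invokes a Taylor expansion of \(\hat\tau\) about \(F(m)\) and a Bahadur representation at the random level, and obtains the result through cancellation of the \(\sqrt{n}(\hat F(m)-F(m))\) terms. You instead linearise the Z--estimator once, directly at the fixed point \(m=q(z_m,h)\), and exploit the exact algebraic identity \(\hat\Psi(m;\hat z,h)=2[\hat F(m)-\hat F(\bar Y)]+h[m-\hat q(\hat F(\bar Y))]\); the first piece is handled by stochastic equicontinuity of \(\sqrt n(\hat F-F)\) plus differentiability of \(F\) at \(m\), and the second by identifying \(\hat q(\hat F(\bar Y))\) with the largest order statistic not exceeding \(\bar Y\) and bounding the straddling spacing. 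This buys a self-contained argument that does not lean on the plug--in CLT or on Bahadur representations at random levels; what the paper's route buys is that the delicate closeness of \(\hat q(\hat F(\bar Y))\) to \(\bar Y\) is absorbed into machinery already set up in Appendix~\ref{app:plugin-clt}, whereas in your version it becomes the one nonroutine lemma, exactly as you flag. Two small points to tighten: because \(\bar Y\) depends on the whole sample, the cleanest justification of \(\bar Y-\hat q(\hat F(\bar Y))=o_p(n^{-1/2})\) is a maximal-spacing bound \(O_p(\log n/n)\) over a fixed neighbourhood of \(m\) where \(f\) is bounded below, rather than the \(O_p(n^{-1})\) claim for the particular spacing containing \(\bar Y\); and the rate \(\hat q(\hat z,h)-m=O_p(n^{-1/2})\) that feeds the linearisation deserves the usual one-line monotone-score argument (monotonicity of \(q\mapsto\hat\Psi(q;\hat z,h)\), \(\sqrt n\,\hat\Psi(m;\hat z,h)=O_p(1)\), and the slope bound \(2f(m)+h>0\)). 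Neither point affects the validity of the argument, and its level of detail matches that of the paper's own appendices.
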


\noindent\textbf{Remark.}
The result follows from the linear representation of Theorem~\ref{thm:CLT} applied at the random level \(\hat z\), combined with a first--order expansion of \(\hat F(\bar Y)\) around \(F(m)\). The detailed argument is given in Appendix~\ref{app:mean-clt}. The key simplification is that
\[
\hat q(\hat z,h) - m = \bar Y - m + o_p(n^{-1/2}),
\]
showing first--order equivalence to the sample mean for every \(h \ge 0\).

This CLT allows a direct efficiency comparison between the sample mean and its
interpolated competitors $\hat q(\hat z,h)$.
This result is the key theoretical tool for answering the question whether a interpolated estimator \(\hat q(\hat z,h)\) can outperform the sample mean \(\bar Y\).
\section{Parameter Geometry}
\label{sec:geometry}
This section connects the analytical results of Section~\ref{sec:asymptotics} with the geometric structure of the parameter space introduced in Section~\ref{sec:family}. The key observation is that for each fixed quantile level, the admissible parameter pairs lie on straight lines, and moving along these lines has important implications for the three estimator classes.
\subsection{Geometric Structure of the Parameter Space}
For each fixed quantile level $\tau\in(0,1)$,
the parameter pairs $(z,h)$ for which the population solution satisfies
\[
F(q(z,h)) = \tau
\]
lie on a straight line in the $(z,h)$-plane. Specifically, solving the
population equation $F(q(z,h)) + \frac{h}{2} q(z,h) = \frac{1 - z + hm}{2}$ for
$F(q(z,h)) = \tau$ yields the linear constraint
\[
\tau = \frac{1}{2}\big(h(m-F^{-1}(\tau)) + (1-z)\big),
\]
or equivalently
\[
z = 1 - 2\tau + h\bigl(m - F^{-1}(\tau)\bigr).
\]
Thus, for fixed $\tau$, the admissible pairs form the line
$(z(\tau,h), h)$ with $h\ge0$ and $z(\tau,h)$ as above.

\medskip
\noindent
The asymptotic variance along these lines and its behaviour with respect to $h$
are characterized in Corollary~\ref{cor:hstar} (Section~\ref{subsec:clt1}),
which shows that interpolation can improve efficiency for heavy-tailed distributions.
\subsection{Implications for the Three Estimator Classes}
\label{sec:efficiency}
The geometric structure has distinct consequences for each of the three estimator families:
\paragraph{Class (1): Fixed $(z,h)$ estimators.}
For fixed $z\in(-1,1)$ and $h\ge0$, Theorem~\ref{thm:CLT} provides the
asymptotic distribution of $\hat q(z,h)$. As $h$ varies, the estimator
interpolates between the $\tau$-quantile (with $\tau=(1-z)/2$) and the
mean $m$.
\paragraph{Class (2): Plug--in estimator of a fixed quantile.}
For a target quantile level $\tau$, the natural approach would be to use
the pair $(z(\tau,h),h)$ from the line defined above. However, since
$z(\tau,h)$ depends on unknown population quantities $m$ and $F^{-1}(\tau)$,
one replaces them by their empirical counterparts, leading to the
plug--in estimator $\hat q(\hat z(\tau,h),h)$ with
$\hat z(\tau,h) = 1-2\tau + h(\bar Y - \hat q(\tau))$.

Theorem~\ref{thm:plugin-clt} shows that this plug--in construction yields
an estimator that is \emph{first-order equivalent} to the ordinary sample
quantile $\hat q(\tau)$. Consequently, interpolation does not improve the
asymptotic efficiency for estimating a fixed quantile when using the
plug--in approach.
\paragraph{Class (3): Mean--estimating family.}
For mean estimation, the relevant line in parameter space is the one
that passes through the point $(z_m, h)$ for all $h$, where
$z_m = 1-2F(m)$ ensures $q(z_m,h)=m$. Using the empirical counterpart
$\hat z = 1-2\hat F(\bar Y) + h\bigl(\bar Y - \hat q(\hat F(\bar Y))\bigr)$,
Theorem~\ref{thm:mean-clt-main} establishes that $\hat q(\hat z,h)$ is
first-order equivalent to the sample mean $\bar Y$ for every $h\ge0$.
\subsection{Summary of Main Findings}
The geometric viewpoint clarifies the different roles played by the interpolation
parameter $h$ in the three families:
\begin{itemize}
  \item For fixed $(z,h)$, the estimator $\hat q(z,h)$ continuously interpolates
        between a quantile and the mean, with asymptotic variance given by
        Theorem~\ref{thm:CLT}.

  \item For the plug--in quantile estimator, interpolation does not translate into
        asymptotic efficiency gains; the estimator remains equivalent to the
        classical sample quantile.

  \item For mean estimation, the interpolated family $\hat q(\hat z,h)$ is
        asymptotically equivalent to the sample mean for all $h\ge0$, showing
        that interpolation cannot improve upon the efficiency of $\bar Y$.
\end{itemize}
These conclusions are validated numerically in Section~\ref{sec:numerical_validation}.
\section{Numerical Validation of Theoretical Results}
\label{sec:numerical_validation}

\subsection{Experiment 1: Fixed \texorpdfstring{$(z,h)$}{(z,h)} vs Classical Quantile}
\label{subsec:exp-a}

This experiment illustrates the efficiency comparison established in
Theorem~\ref{thm:CLT} between the interpolated estimator
$\hat q(z,h)$ at a fixed pair $(z,h)$ and the classical sample quantile
targeting the \emph{same population quantile}.
For each $(z,h)$, let $q(z,h)$ denote the population solution of the estimating
equation and define $\tau(z,h)=F(q(z,h))$.
The relevant benchmark is therefore the classical quantile estimator
$\hat q(\tau(z,h))$, whose asymptotic variance equals
$\tau(z,h)\bigl(1-\tau(z,h)\bigr)/f^2\!\bigl(q(z,h)\bigr)$.

\paragraph{Design.}
We consider two distributions:
(i) the standard normal distribution and
(ii) the standard Laplace distribution.
For each distribution, we fix three values of $z\in\{-0.5,0,0.5\}$,
corresponding at $h=0$ to the quantile levels $\tau_0=(1-z)/2\in\{0.25,0.5,0.75\}$.
For each fixed $z$, the interpolation parameter $h$ varies on a grid
$h\in[0,5]$.
For every $(z,h)$, the population equation is solved numerically to obtain
$q(z,h)$ and $\tau(z,h)$.

\paragraph{Quantities compared.}
For each $(z,h)$, we compute:
\begin{itemize}
\item the asymptotic variance $\sigma^2(z,h)$ of $\hat q(z,h)$ given by
Theorem~\ref{thm:CLT}, and
\item the classical quantile variance
$\tau(z,h)\bigl(1-\tau(z,h)\bigr)/f^2\!\bigl(q(z,h)\bigr)$.
\end{itemize}
The results are summarized through the ratio
\[
R(z,h)
=\frac{\sigma^2(z,h)}
{\tau(z,h)\bigl(1-\tau(z,h)\bigr)/f^2\!\bigl(q(z,h)\bigr)}.
\]
Values $R(z,h)<1$ indicate that the interpolated estimator $\hat q(z,h)$ is
asymptotically more efficient than the classical quantile estimator
targeting the same population quantile.

\paragraph{Results.}
Figure~\ref{fig:exp-a} reports $R(z,h)$ as a function of $h$ for the three values
of $z$ and for both distributions.
By construction, $R(z,0)=1$ for all $z$, since $h=0$ corresponds to the
ordinary sample quantile.
For $h>0$, the behaviour depends on both the distribution and the value of $z$.

\begin{figure}[!htbp]
\centering
\includegraphics[width=0.95\textwidth]{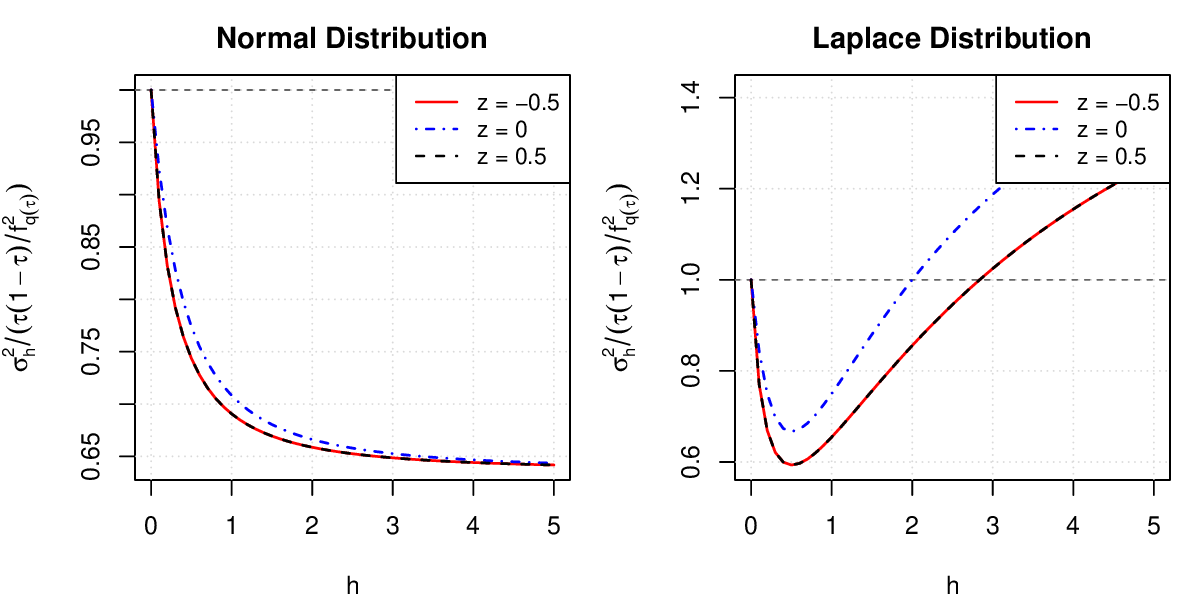}
\caption{Experiment 1: Efficiency ratio $R(z,h)$ for the fixed $(z,h)$ estimator vs. the classical quantile estimator. Left: Normal distribution. Right: Laplace distribution. Three values of $z$ are shown: $z=-0.5$ (solid), $z=0$ (dashed), and $z=0.5$ (dotted). The horizontal line at 1 indicates the classical quantile variance.}
\label{fig:exp-a}
\end{figure}

For the normal distribution, the ratio $R(z,h)$ decreases monotonically in $h$
for all three values of $z$, indicating a uniform efficiency gain from interpolation.
The reduction is particularly pronounced around the median ($z=0$), but is also
clearly visible for upper and lower quartiles.

For the Laplace distribution, the behaviour is non-monotone.
For $z=\pm0.5$, the ratio initially decreases below one, showing a substantial
efficiency gain for moderate values of $h$, before eventually increasing and
exceeding one for larger $h$.
In contrast, for $z=0$ (the median), the ratio first decreases, then increases,
crossing one at a finite value of $h$.
This pattern is fully consistent with the cases described in
Corollary~\ref{cor:hstar}, which predicts the possible existence of a finite
variance-minimizing interpolation level depending on the distribution and the target
quantile.

Overall, this experiment confirms that, when $(z,h)$ is fixed,
the interpolated estimator $\hat q(z,h)$ can substantially outperform the classical
quantile estimator targeting the same population quantile, and that the magnitude
and shape of the efficiency gains depend sensitively on both the underlying
distribution and the quantile level.

\subsection{Experiment 2: Fixed \texorpdfstring{$(z,h)$}{(z,h)} – Efficiency Gain under Heavy Tails}
\label{subsec:exp_b}

This experiment investigates the asymptotic efficiency of the interpolated estimator \(\hat q(z,h)\) for fixed parameters \((z,h)\).  
The novel aspect is to compare its asymptotic variance \(\sigma^2(h)\) with the variance of the sample mean \(\Var(Y)\) for finite \(h\).  
For symmetric distributions with mean equal to the median (e.g., Normal and Laplace), choosing \(z=0\) targets the centre for all \(h\), so a direct variance comparison is meaningful because the bias is zero.
The main finding is that for heavy-tailed distributions, the asymptotic variance $\sigma^2(h)$ can fall strictly below $\Var(Y)$ for moderate values of $h$, yielding an estimator of the centre that is more efficient than the sample mean.

\paragraph{Theoretical background.}
For fixed \((z,h)\), Theorem~\ref{thm:CLT} gives the asymptotic variance
\[
\sigma^2(h)=\frac{B(z,h)}{\bigl(2f(q(z,h))+h\bigr)^2},
\]
with \(B(z,h)\) as defined in the theorem.  
As \(h\to\infty\), \(\hat q(z,h)\) converges to the sample mean, hence
\[
\sigma^2(h)\;\longrightarrow\;\Var(Y),\qquad h\to\infty.
\]
For \(h=0\) we recover the classical quantile variance \(\tau(1-\tau)/f^2(q(\tau))\) with \(\tau=(1-z)/2\).
For comparison, we also recall the asymptotic variance of the plug‑in estimator that targets a fixed quantile \(\tau\) (see Theorem~\ref{thm:plugin-clt}):
\[
\frac{\tau(1-\tau)}{f^2(q(\tau))},
\]
which does not depend on \(h\). When \(h\to\infty\), the induced quantile level \(\tau(z,h)=F(q(z,h))\) converges to \(F(m)\), so the limiting plug‑in variance is
\[
\frac{F(m)\bigl(1-F(m)\bigr)}{f^2\!\bigl(F(m)\bigr)}.
\]
While this limiting quantity is classical, the present experiment focuses on the behaviour of $\sigma^2(h)$ for finite values of $h$, where new phenomena arise.

\paragraph{Design of the experiment.}
We consider the standard Normal and standard Laplace distributions, both symmetric about zero, so \(m=0\) and \(F(m)=1/2\). For each distribution, we compute \(\sigma^2(h)\) for three values of \(z\in\{-0.5,0,0.5\}\), corresponding to initial quantiles below, at, and above the median. In addition to \(\sigma^2(h)\), we plot:
\begin{itemize}
	\item the plug‑in variance \(\tau(z,h)(1-\tau(z,h))/f^2(q(\tau(z,h)))\);
	\item the limiting plug‑in variance at \(F(m)\);
	\item the variance \(\Var(Y)\) of the sample mean.
\end{itemize}
All quantities are evaluated at the population level and plotted as functions of \(h\).

\paragraph{Results.}
Figure~\ref{fig:experiment-b} summarizes the results.  
For the Normal distribution, \(\sigma^2(h)\) decreases monotonically from the quantile variance towards \(\Var(Y)=1\), but appears to remain above 1 for all values of $h$. Hence, interpolation does not improve upon the sample mean for light‑tailed distributions. This indicates that for light-tailed distributions, the sample mean remains asymptotically optimal within this class, and quadratic regularization does not yield efficiency gains.

For the Laplace distribution, the behaviour is different.  
For \(z=0\) (the median case), \(\sigma^2(h)\) initially decreases from the median variance (1) to a minimum of approximately 0.67 at \(h\approx0.4\), which is substantially lower than \(\Var(Y)=2\). As \(h\) increases further, \(\sigma^2(h)\) rises and approaches 2 from below. For \(z=\pm0.5\), the variance also dips below 2 for a range of moderate \(h\).  
Thus, for the Laplace distribution, a finite amount of quadratic regularization produces an estimator of the centre with lower asymptotic variance than both the sample mean and the sample median, a genuine efficiency gain.

\paragraph{Interpretation and novelty.}
The key finding is that for heavy-tailed symmetric distributions, the interpolated estimator $\hat q(0,h)$ can achieve an asymptotic variance strictly smaller than $\Var(Y)$ for a range of finite values of $h$. 
This phenomenon does not occur for classical estimators: the sample mean has fixed variance $\Var(Y)$, while the sample median has variance determined solely by the density at the centre. In contrast, the interpolated estimator exploits a hybrid weighting scheme that simultaneously reduces the influence of extreme observations and stabilizes variability, leading to a strictly improved efficiency in intermediate regimes.

This demonstrates that the parameter $h$ does not merely interpolate between quantile and mean behaviour, but creates a new class of estimators with nontrivial efficiency properties that depend on the underlying distribution.

\begin{figure}[!htbp]
	\centering
	\includegraphics[width=\textwidth]{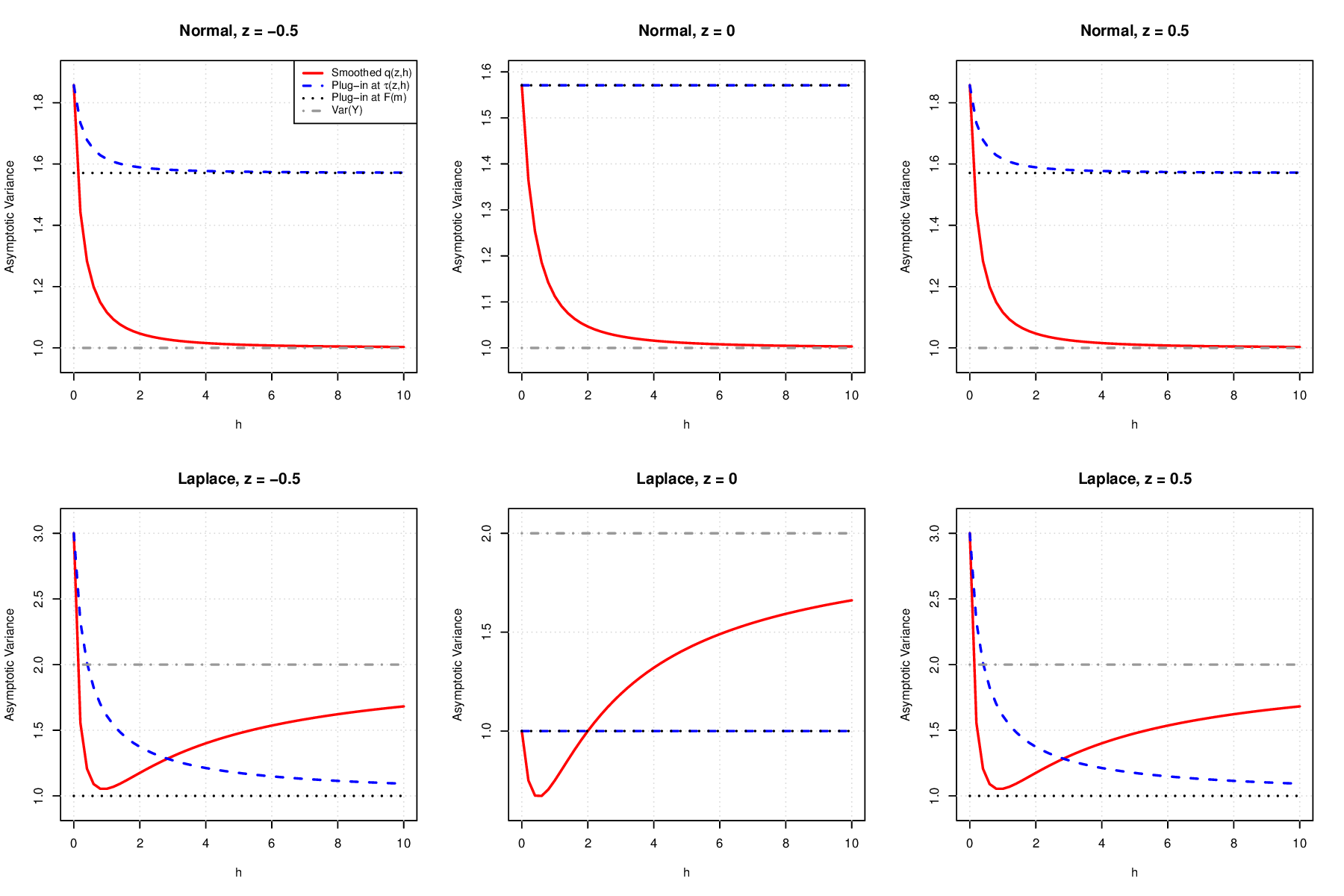}
	\caption{Experiment (b): asymptotic variances of the interpolated estimator \(\hat q_n(z,h)\), the plug‑in estimator at \(\tau(z,h)\), the limiting plug‑in variance at \(F(m)\), and \(\Var(Y)\), as functions of \(h\). Top row: Normal distribution. Bottom row: Laplace distribution. For Laplace and \(z=0\) (centre), the interpolated variance drops well below \(\Var(Y)=2\) (dotted horizontal line), demonstrating a genuine efficiency gain over the sample mean for finite \(h\).}
	\label{fig:experiment-b}
\end{figure}

\paragraph{Numerical illustration for fixed quantile levels.}
To complement Simulation~(b), we provide a finite-sample illustration of the behaviour of the two estimators for fixed quantile levels \(\tau \in \{0.25, 0.5, 0.75\}\).  
For each \(\tau\), we consider the population pair \((z(\tau,h),h)\) such that \(F(q(z(\tau,h),h))=\tau\), and compare the trajectories of \(\hat q(z(\tau,h),h)\) and the plug‑in estimator \(\hat q(\hat z(\tau,h),h)\).  
Figure~\ref{fig:fixed_tau_paths} shows that both estimators remain centred around the target quantile \(q(\tau)\), confirming that the variance differences observed in Figure~\ref{fig:experiment-b} are not due to targeting discrepancies.

\begin{figure}[!htbp]
	\centering
	\includegraphics[width=\textwidth]{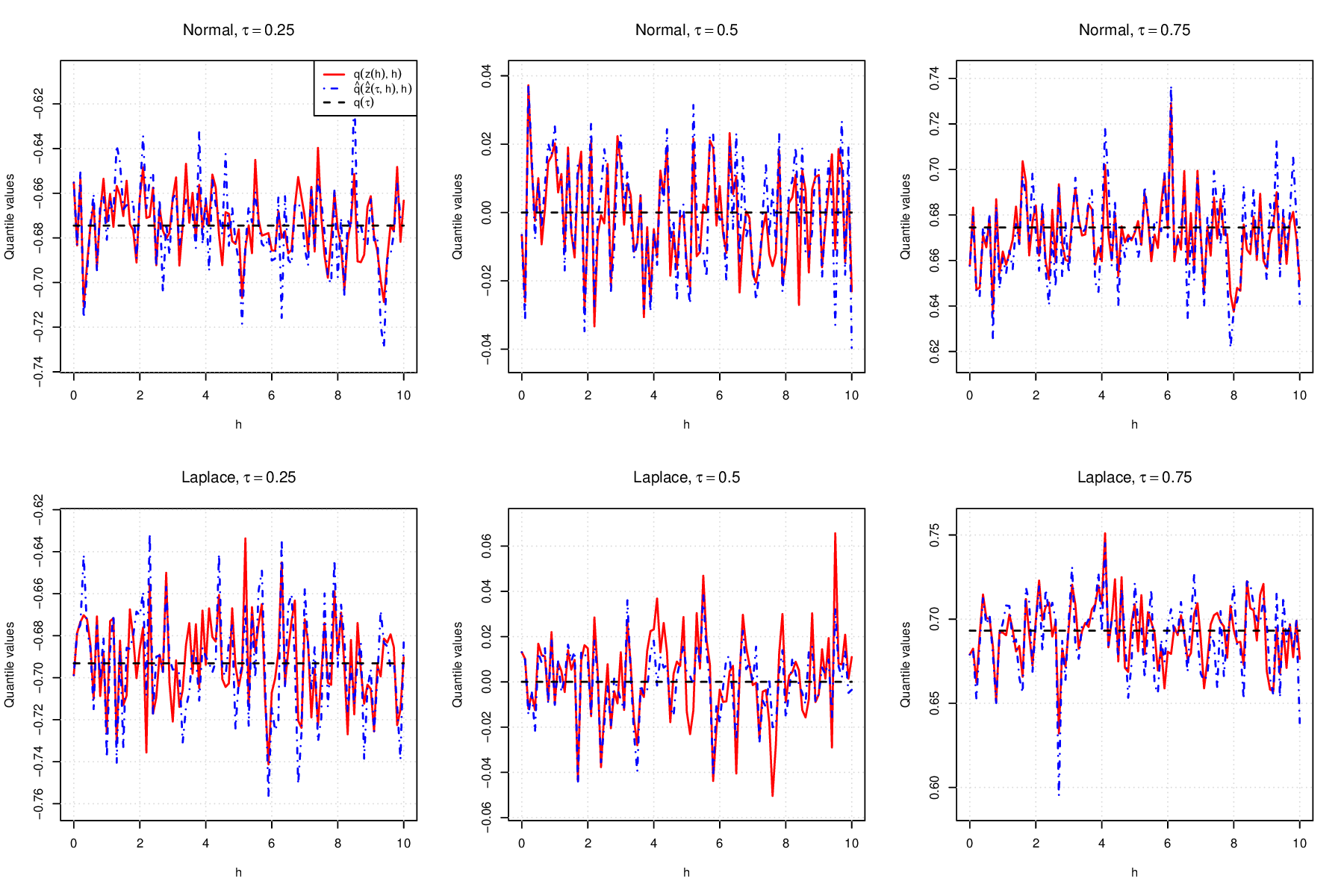}
	\caption{Finite-sample trajectories of \(\hat q(z(\tau,h),h)\) and \(\hat q(\hat z(\tau,h),h)\) for \(\tau=0.25,0.5,0.75\). Top row: Normal distribution. Bottom row: Laplace distribution. The horizontal line indicates the target quantile \(q(\tau)\).}
	\label{fig:fixed_tau_paths}
\end{figure}

\subsection{Experiment 3: Large-\texorpdfstring{$h$}{h} Behaviour and Convergence to the Mean}
\label{subsec:exp_c}

This experiment investigates the large--$h$ behaviour of the estimator family and provides a numerical illustration of the asymptotic equivalence results established in Theorem~\ref{thm:mean-clt-main} and Theorem~\ref{thm:plugin-clt}. In particular, we compare three estimators of the mean $m$:
\begin{enumerate}
\item the sample mean $\bar Y$;
\item the interpolated estimator $\hat q(z,h)$ with a fixed value of $z$;
\item the mean--estimating family $\hat q(\hat z,h)$, where $\hat z$ is constructed from the data.
\end{enumerate}

\paragraph{Design of the experiment.}
Independent samples of size $n=1000$ are generated from two distributions: the standard normal distribution and the standard Laplace distribution. For each Monte Carlo replication and each interpolation level
\[
h \in \{0,1,2,5,10,20,50\},
\]
we compute the three estimators listed above. The experiment is repeated over $500$ Monte Carlo replications, and empirical variances, biases, and mean squared errors are recorded.

\paragraph{Results.}
Figure~\ref{fig:experiment_c} reports the Monte Carlo variances of the three estimators as functions of $h$, for both distributions. Several clear patterns emerge.

First, the variance of the mean--estimating family $\hat q(\hat z,h)$ is essentially constant in $h$ and numerically indistinguishable from the variance of the sample mean $\bar Y$. This is fully consistent with Theorem~\ref{thm:mean-clt-main}, which shows that
\[
\hat q(\hat z,h) = \bar Y + o_p(n^{-1/2})
\quad\text{for every fixed } h\ge 0.
\]

Second, the fixed--$z$ estimator $\hat q(z,h)$ exhibits a markedly different behaviour. Its variance depends strongly on $h$ and converges to the variance of the sample mean as $h$ increases. This convergence is monotone in both distributions considered and reflects the fact that, for fixed $z$, the population solution $q(z,h)$ approaches $m$ as $h\to\infty$.

Third, the contrast between the two estimators highlights an important distinction between population efficiency and implementable efficiency. Although interpolation may improve the population variance along fixed--$\tau$ lines (as discussed in Corollary~\ref{cor:hstar}), the plug--in estimator $\hat q(\hat z,h)$ remains first--order equivalent to the sample mean, and therefore cannot outperform it asymptotically.

\paragraph{Conclusion.}
This experiment confirms the theoretical finding that the mean--estimating family behaves, from a first--order perspective, exactly like the sample mean, uniformly in $h$. At the same time, it illustrates how interpolation affects estimators with fixed tuning parameters, thereby clarifying the distinct roles played by population geometry and plug--in implementation.

\begin{figure}[!htbp]
\centering
\includegraphics[width=\textwidth]{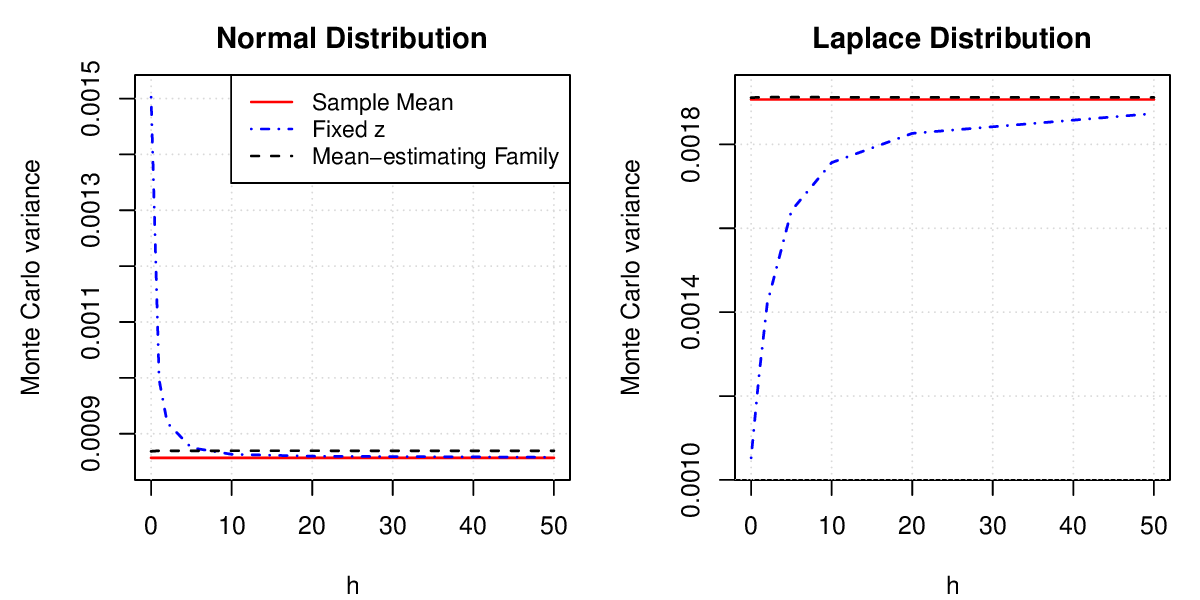}
\caption{Monte Carlo variances of the sample mean, the fixed--$z$ estimator, and the mean--estimating family as functions of $h$, for the normal (left) and Laplace (right) distributions.}
\label{fig:experiment_c}
\end{figure}

\subsection{Empirical Verification of the Monotonicity Properties}
\label{subsec:monotonicity}

\paragraph{Motivation.}
The theoretical analysis of Section~\ref{sec:geometry} shows that, for each fixed $z$,
the population solution $q(z,h)$ moves monotonically toward the mean $m$ as the
interpolation parameter $h$ increases. The direction of this movement depends on the
relative position of the associated quantile $F^{-1}((1-z)/2)$ with respect to $m$.
In particular:
(i) if $F^{-1}((1-z)/2)<m$, then $q(z,h)$ increases in $h$;
(ii) if $F^{-1}((1-z)/2)>m$, then $q(z,h)$ decreases in $h$; and
(iii) if $F^{-1}((1-z)/2)=m$, then $q(z,h)=m$ for all $h\ge0$.

This subsection investigates whether these monotonicity properties persist at the
sample level when the population quantities are replaced by their empirical counterparts.

\paragraph{Design of the experiment.}
To this end, we consider the empirical estimator $\hat q(z,h)$ defined by the estimating
equation~\eqref{eq:empirical-score}, with the unknown distribution function $F$
and mean $m$ replaced by $\hat F$ and $\bar Y$, respectively.
For each dataset, we fix several values of $z$ corresponding to quantile levels
$\tau=(1-z)/2$ below, equal to, and above the mean.
For each such $z$, we compute $\hat q(z,h)$ over a fine grid of $h$ values ranging from
$0$ to a large upper bound.

The experiment is conducted for samples of size $n=5000$ drawn from both the standard
normal and Laplace distributions. The resulting trajectories $h\mapsto\hat q(z,h)$
are compared to the empirical quantile $\hat q(\tau)$ (corresponding to $h=0$) and the
sample mean $\bar Y$ (corresponding to the large-$h$ limit).

\paragraph{Results.}
Figure~\ref{fig:monotonicity} displays the empirical trajectories for the three cases.
In all configurations, the behaviour of $\hat q(z,h)$ closely mirrors the theoretical
properties of the population mapping $q(z,h)$.

When the target quantile lies below the mean, the estimator $\hat q(z,h)$ increases
monotonically from the sample quantile toward the sample mean as $h$ grows.
Conversely, when the target quantile lies above the mean, $\hat q(z,h)$ decreases
monotonically toward $\bar Y$.
Finally, when the target quantile coincides with the mean, the trajectory remains
numerically flat, confirming that $\hat q(z,h)$ is essentially invariant in $h$ in
this case.

These results provide strong empirical evidence that the monotonicity properties
derived at the population level are preserved by the empirical estimator.
They reinforce the interpretation of the family $\{\hat q(z,h):h\ge0\}$ as a continuous
path connecting the sample quantile to the sample mean while retaining the structural
ordering implied by the parameter $z$.

\begin{figure}[!htbp]
\centering
\includegraphics[width=\textwidth]{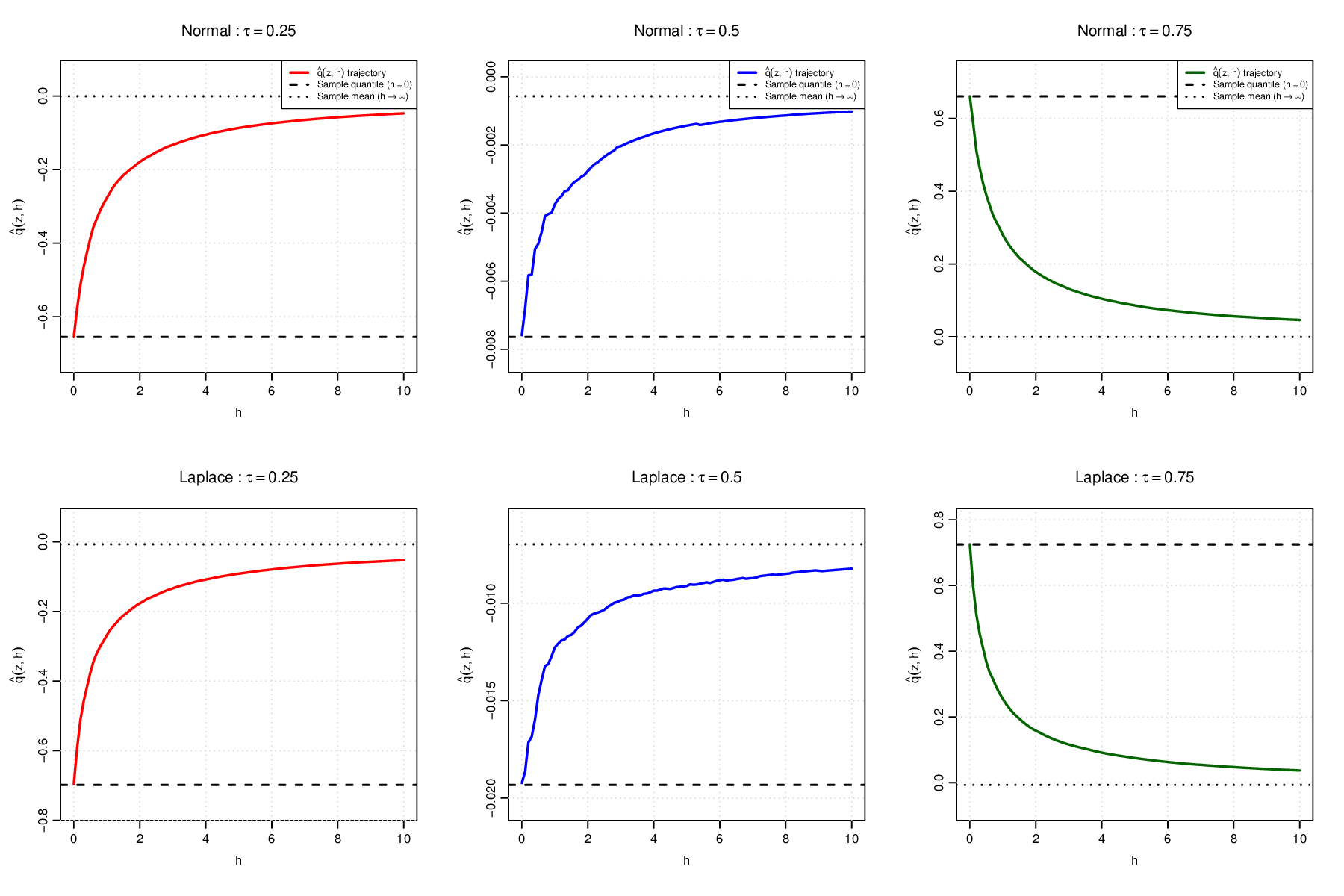}
\caption{Empirical trajectories $h\mapsto \hat q(z,h)$ for quantiles below, equal to,
and above the mean. The dashed line corresponds to the empirical quantile ($h=0$),
while the dotted line indicates the sample mean.}
\label{fig:monotonicity}
\end{figure}

\subsection{Real-data Application}
\label{subsec:realdata}

The preceding experiments established that the empirical estimator
$\hat q_n(z,h)$ preserves the qualitative geometric properties of its
population counterpart $q(z,h)$, and that the interpolation parameter $h$
controls the transition from quantile-like to mean-like behaviour.
We now investigate whether these features persist when the estimator
is applied to real data exhibiting heavy tails and potential departures
from classical parametric assumptions.

For this, we now investigate the empirical behaviour of the Plug--in estimator
\[
\hat q_n(z,h)
\]
on real financial data. We analyse daily closing prices of the CAC–40 index, obtained from Yahoo Finance via the \texttt{quantmod} R package \citep{quantmod}, over the period 2007–2025.  
Let $P_t$ denote the closing price on day $t$. We work with log-returns
\[
Y_t = \log(P_t) - \log(P_{t-1}),
\]
which are standard in financial econometrics. After removing calendar-related gaps and forward and backward filling isolated missing values, the final series contains $n=|\{Y_t\}|$ observations. Figure~\ref{fig:CAC40-diagnostics} provides basic diagnostics of the price and return series.

\begin{figure}[!htbp]
\centering
\includegraphics[width=0.95\textwidth]{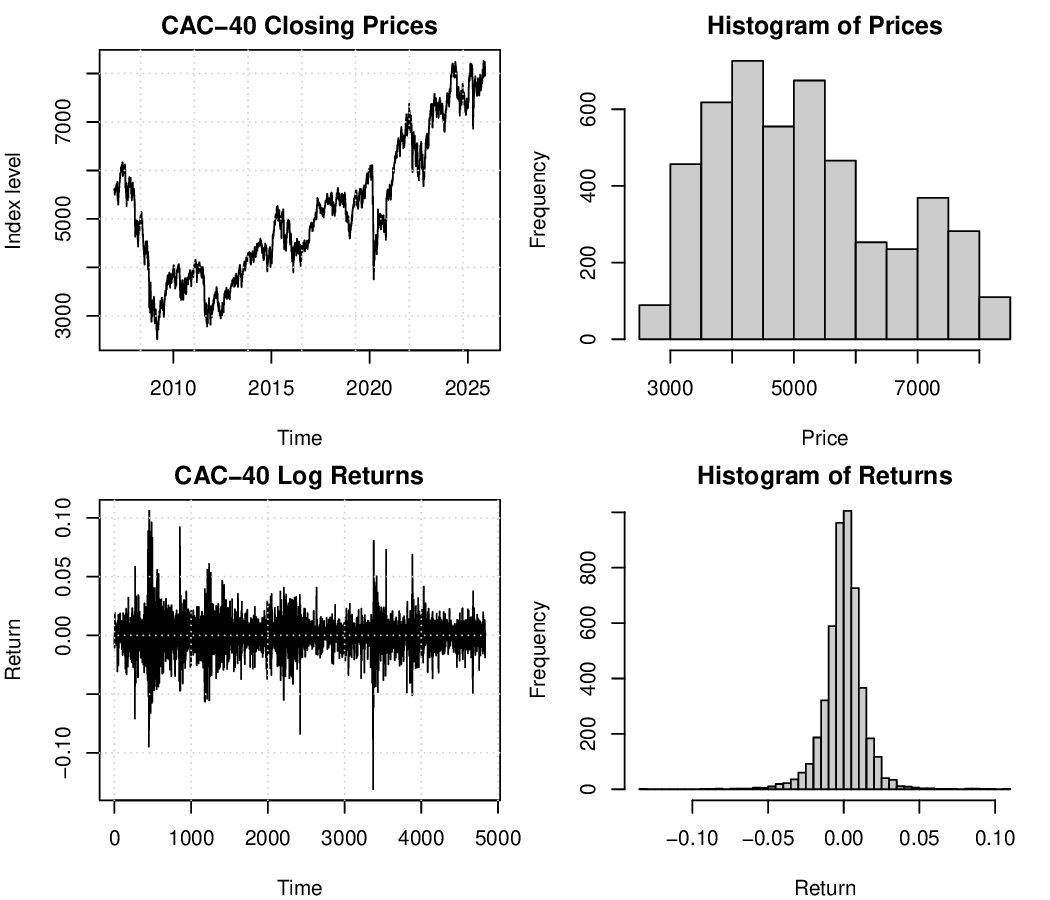}
\caption{Diagnostic plots for CAC–40 daily data (2007–2025). 
Top-left: closing prices; top-right: histogram of prices; 
bottom-left: log-returns; bottom-right: histogram of log-returns.}
\label{fig:CAC40-diagnostics}
\end{figure}

\paragraph{Empirical Plug--in estimator.}
Given the observed returns $Y_1,\dots,Y_n$, the empirical Plug--in estimator is defined by the sample analogue of the population equation:
\begin{equation}
\label{eq:qhat-realdata}
\hat F_n(q) \;+\; \frac{h}{2} q
\;=\;
\frac{1 - z + h\,\bar Y_n}{2},
\end{equation}
where $\hat F_n$ is the empirical cdf and $\bar Y_n$ is the sample mean.  
For each $(z,h)$, we solve~\eqref{eq:qhat-realdata} numerically by root-finding,
using a conservative bracketing interval 
\[
q \in \bigl[\min(Y)-10\,\widehat\sigma,\; \max(Y)+10\,\widehat\sigma\bigr],
\]
with $\widehat\sigma$ the sample standard deviation.

Because financial returns have heavier tails than Gaussian or Laplace samples, the function
$h \mapsto \hat q_n(z,h)$ stabilises more slowly as $h$ increases.  
In particular, for the CAC–40 dataset considered here, the curves do not fully reach their limiting affine shape within $h\in[0,5]$,
unlike in the simulation study.
To make the stabilisation clearly visible, we examine values up to
\[
h \in [0,200],\qquad h = 0,0.1,0.2,\dots,200.
\]

\paragraph{Empirical results.}

For three representative values $z\in\{-0.5,0,0.5\}$, we compute the full mapping
$h\mapsto\hat q_n(z,h)$.
The resulting trajectories are shown in Figure~\ref{fig:CAC40-qhat}.

\begin{figure}[!htbp]
\centering
\includegraphics[width=\textwidth]{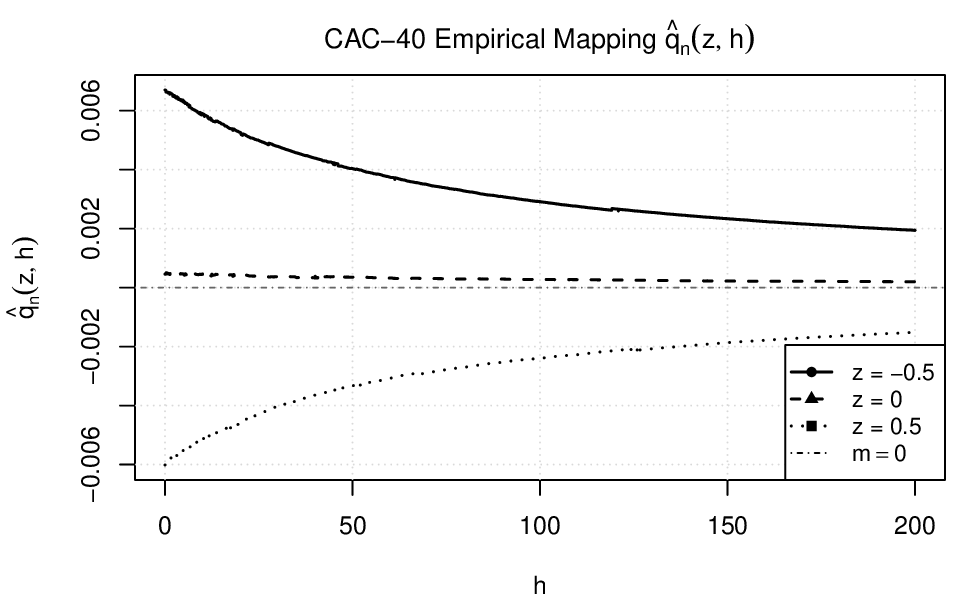}
\caption{Empirical behaviour of $\hat q_n(z,h)$ computed from CAC–40 daily returns.
Three curves shown: $z=-0.5$ (solid), $z=0$ (dashed), and $z=0.5$ (dotted). 
Horizontal line marks the sample mean $\bar Y_n$.}
\label{fig:CAC40-qhat}
\end{figure}

Several features mirror those observed in the population and simulation analyses:

\begin{itemize}
\item For $z=-0.5$, the estimator $\hat q_n(z,h)$ is positive and decreases monotonically in $h$.
\item For $z=0.5$, the estimator is negative and increases with $h$.
\item The qualitative behaviour is consistent with the theoretical mapping $q(z,h)$:
the $z<0$ and $z>0$ curves have opposite monotonic trends in $h$, while the neutral curve ($z=0$) remains close to zero.
\end{itemize}

A key difference with the simulation study is the rate at which stabilisation occurs.  
For Gaussian or Laplace samples, the mapping $h\mapsto\hat q_n(z,h)$ typically reaches its limiting regime around $h\approx 4$.  
For real financial data, heavier-tailed behaviour makes the stabilisation substantially slower: the curves settle only for $h$ in the order of 50--100.  
This behaviour is not a defect of the estimator but a direct reflection of the empirical distribution of returns, whose tails decay more slowly than in classical parametric models.

Overall, the real-data experiment confirms that:
\begin{enumerate}
\item The empirical Plug--in estimator $\hat q_n(z,h)$ behaves in a manner consistent with the theoretical function $q(z,h)$.
\item The interpolation parameter $h$ must be chosen in relation to the tail behaviour of the underlying distribution: heavier-tailed data require larger~$h$ before the linear asymptotics fully appear.
\end{enumerate}
These findings reinforce the relevance of the Plug--in formulation and clarify how interpolation interacts with distributional features in practical applications.

The previous experiments focused on the path \(h \mapsto \hat q_n(z,h)\) itself.
We now complement these findings by examining how interpolation influences the estimator’s asymptotic variance.

This analysis is key for understanding the estimator’s efficiency, and it prepares the ground for the next subsection, where the optimal choice of \(h\) is studied systematically across quantiles.
\section{Conclusion}
\label{sec:conclusion}
This paper develops a unified framework for understanding and comparing three fundamental classes of estimators:
\begin{enumerate}[label=(\roman*)]
	\item the \emph{fixed-parameter interpolated quantile estimators} $\hat q(z,h)$;
    \item the \emph{fixed-quantile Plug--in estimator}; and
    \item the \emph{mean-estimating family} obtained by evaluating the interpolated map at 
          $\hat z = 1-2\hat F(\bar Y)+h\bigl(\bar Y- \hat q(\hat F(\bar Y))\bigr)$.
\end{enumerate}
All three arise from the same interpolated objective function and admit a common M-estimation structure.

A central contribution of the paper is the derivation of the \emph{three corresponding central limit theorems}, each established through a unified proof strategy relying on \emph{uniform asymptotic equicontinuity}.
This provides a coherent theoretical foundation in which the three estimator classes differ only through the way the pair $(z,h)$ is chosen—fixed, data-driven, or lying along the geometric lines associated with a fixed quantile level.

The parameter geometry plays a key organizational role.
For each quantile level $\tau$, the admissible parameter pairs $(z,h)$ lie on straight lines along which the population target remains $q(\tau)$ while the asymptotic variance evolves according to the explicit formula $v(\tau,h)$.
This geometric structure clarifies how interpolation affects efficiency without altering the target of estimation, and it naturally separates the behaviour of the three estimator classes.

The asymptotic efficiency analysis reveals two regimes.
At the population level, the classical interpolated estimator may exhibit a finite optimal interpolation level $h^*(\tau)$ for heavy-tailed distributions (case~(c) of Corollary~\ref{cor:hstar}), whereas in light-tailed settings interpolation reduces variance monotonically.
In contrast, the Plug--in estimator $\hat q(\hat z(\tau,h),h)$ behaves differently: its asymptotic variance is \emph{always} strictly decreasing in $h$, with no finite minimiser, and converges to $\Var(Y)$.
This explains why the population and Plug--in variances must be treated separately and why Corollary~\ref{cor:hstar} remains essential despite the Plug--in correction eliminating interior optima.

The numerical section confirms all theoretical predictions.
For the fixed-quantile Plug--in estimator, both simulated and real data display empirical trajectories $h\mapsto \hat q_n(z,h)$ that match the theoretical shape of $q(z,h)$, with stabilization occurring more slowly under heavy tails.
For the interpolated quantile estimator, the efficiency patterns illustrated numerically coincide exactly with the three cases of Corollary~\ref{cor:hstar}.
Finally, for the mean-estimating family $\hat q(\hat z,h)$, both the theory and a dedicated Monte-Carlo study confirm that \emph{the estimator is first-order equivalent to the sample mean for every $h\ge0$}, showing no variance reduction from interpolation.

Overall, the paper provides a complete and unified treatment of three estimator classes that have traditionally been studied separately.
The framework reveals a coherent geometric and asymptotic structure governing robustness, interpolation, and efficiency, and offers a principled basis for further developments.
Future work will extend the analysis to interpolated quantile regression, optimal data-driven selection of $h$, and adaptive estimation under heavy-tailed or asymmetric noise.

\paragraph{Data Availability Statement} 
The real-world financial data analysed in Section~\ref{subsec:realdata} are publicly available from Yahoo Finance (CAC-40 index). The simulation code used to 
produce the results, along with code for downloading and processing the 
financial data, is available from the corresponding author upon reasonable request.

\paragraph{Funding Statement} This research received no specific grant from any funding agency in the public, commercial, or not-for-profit sectors.

\paragraph{Conflict of Interest Statement} The authors declare that there are no conflicts of interest regarding the publication of this paper.

\clearpage
\bibliographystyle{plainnat}
\bibliography{references}
\clearpage
\appendix
\section{Proof of Theorem~\ref{thm:CLT}: Interpolated Quantile CLT}
\label{app:clt1}
This appendix provides a complete proof of the asymptotic normality of the interpolated quantile estimators
\[
\hat q(z,h)=\arg\min_{q\in\mathbb{R}} \hat M(q;z,h),
\qquad
z\in(-1,1),h\ge 0,
\]
where
\[
\hat M(q;z,h)
=\frac{1}{n}\sum_{i=1}^n m(q-Y_i;z,h),
\qquad
m(u;z,h)=|u|+z u+\tfrac{h}{2}u^2.
\]
The proof follows the structure of a uniform Z–estimator argument, where the key ingredients are:
\begin{itemize}
    \item Knight’s identity for the absolute value function;
    \item a uniform control of the empirical process via asymptotic equicontinuity;
    \item a quadratic expansion of the criterion;
    \item uniqueness and stability of the minimiser.
\end{itemize}
Throughout we impose assumptions \(A1\)–\(A2\) from Section~\ref{sec:asymptotics}.
\subsection*{Preliminaries}
Let the population criterion be
\[
M(q;z,h)=\mathbb{E}[m(q-Y;z,h)],
\]
and let \(q(z,h)\) be its unique minimiser, equivalently the unique solution of the population score equation
\[
\Psi(q;z,h)=0,
\qquad
\Psi(q;z,h)
=\mathbb{E}\bigl[\operatorname{sgn}(q-Y)\bigr] + z + h(m-q).
\]
The empirical score is
\[
\hat\Psi(q;z,h) = 2\hat F(q) - 1 + z + h(q - \bar Y).
\]
This is equivalent to $\frac{1}{n}\sum_{i=1}^n \operatorname{sgn}(q-Y_i) + z + h(q - \bar Y)$.
Because \(f\) is continuous and strictly positive around \(q(z,h)\), the mapping
\[
q\mapsto \Psi(q;z,h)
\]
is strictly increasing, with derivative \(2f(q(z,h))+h\).
\subsection*{Knight’s Identity and Empirical Expansion}
Knight’s identity states:
\[
|y-(q+\delta)|-|y-q|
= -\delta,\sgn(y-q)
+ 2\int_0^\delta
  \bigl(1_{{y\le q+s}}-1_{{y\le q}}\bigr)ds.
\]
Plugging this into \(\hat M\bigl(q(z,h)+\delta;z,h\bigr)-\hat M\bigl(q(z,h);z,h\bigr)\) gives
\begin{align}
\label{eq:A.basic}
&\hat M(q(z,h)+\delta;z,h)-\hat M(q(z,h);z,h)\\
&=\delta\hat\Psi(q(z,h);z,h)
+ 2\int_0^\delta \bigl(\widehat F(q(z,h)+s)-\widehat F(q(z,h))\bigr)ds
+\frac{h}{2}\delta^2.
\nonumber
\end{align}
We now analyse this display under the local scaling \(\delta=t/\sqrt{n}\).
\subsection*{Uniform Asymptotic Equicontinuity}
Let
\[
\mathbb{G}_n = \sqrt{n}(\widehat F - F)
\]
denote the empirical process.

Because \(F\) admits a continuous, strictly positive density \(f\) around \(q(z,h)\), the expansion
\[
F(q+s)-F(q)=f(q)s + o(s)
\]
holds uniformly over small \(s\).

Since the empirical process is uniformly asymptotically equicontinuous on compact intervals (Donsker property of \({1_{{y\le x}}:x\in\mathbb{R}})\), we obtain the uniform approximation:
\[
\sup_{|t|\le T}\left|
\int_0^{t/\sqrt{n}}
(\widehat F(q+s)-\widehat F(q)),ds
-
\frac{f(q)}{2}\frac{t^2}{n}
\right|
\xrightarrow{p}0,
\tag{A.1}\label{eq:A.equicontinuity}
\]
uniformly over \((z,h)\) in compact subsets of \((-1,1)\times[0,\infty)\).

This is the key uniformity enabling Z–estimator linearisation.

\subsection*{Local Expansion of the Empirical Criterion}
Using \eqref{eq:A.basic} with \(\delta=t/\sqrt{n}\) and the equicontinuity result \eqref{eq:A.equicontinuity}, we obtain uniformly over bounded \(t\):
\begin{align}
&\hat M\left(q(z,h)+\frac{t}{\sqrt{n}};z,h\right)
-\hat M(q(z,h);z,h)\\
&=\frac{t}{\sqrt{n}}\hat\Psi(q(z,h);z,h)
+\Bigl( f(q(z,h))+\tfrac{h}{2}\Bigr)\frac{t^2}{n}
+ r_n(z,h,t),
\nonumber
\end{align}
where
\[
\sup_{|t|\le T}|r_n(z,h,t)|\xrightarrow{p}0.
\tag{A.2}
\]
Thus the empirical criterion admits a uniform quadratic expansion around its minimiser.
\subsection*{Local Minimisation and Linear Representation}
We observe that the minimizer of the function
\[
t \mapsto \hat M\!\left(q(z,h)+\frac{t}{\sqrt{n}};z,h\right) - \hat M(q(z,h);z,h)
\]
is exactly \(\sqrt{n}\bigl(\hat q(z,h)-q(z,h)\bigr)\).
From the quadratic expansion, this minimizer is close to the minimizer of the quadratic form
\[
t \mapsto \frac{t}{\sqrt{n}}\,\hat\Psi(q(z,h);z,h)
+ \Bigl( f(q(z,h)) + \frac{h}{2} \Bigr)\frac{t^2}{n}.
\]
Solving this quadratic minimization yields
\[
\sqrt{n}\bigl(\hat q(z,h)-q(z,h)\bigr) = -\frac{\sqrt{n}\,\hat\Psi(q(z,h);z,h)}{2f(q(z,h))+h} + o_p(1).
\tag{A.3}\label{eq:A.linear}
\]
Equivalently, since \(\partial_z q(z,h) = -1/(2f(q(z,h))+h)\), we have
\[
\sqrt{n}\bigl(\hat q(z,h)-q(z,h)\bigr) = \partial_z q(z,h) \sqrt{n}\,\hat\Psi(q(z,h);z,h) + o_p(1).
\]
Now it suffices to establish a central limit theorem for \(\sqrt{n}\,\hat\Psi(q(z,h);z,h)\)
(see, e.g.,~\cite{Pollard1991,Hjort2011,Knight1998}).
\subsection*{Asymptotic Normality}
Define the influence function:
\[
\psi(q(z,h)-Y_i;z,h) = \operatorname{sgn}(q(z,h)-Y_i) + h(q(z,h)-Y_i) + z.
\]
Because \(Y\) has a finite second moment and \(f\) is continuous near \(q(z,h)\), this function has finite variance and mean zero.

From the definition of \(\hat\Psi\):
\[
\sqrt{n},\hat\Psi(q(z,h);z,h)
= \frac{1}{\sqrt{n}}\sum_{i=1}^n
\psi(Y_i;q(z,h),z,h).
\]
By the classical Lindeberg–Feller theorem,
\[
\frac{1}{\sqrt{n}}\sum_{i=1}^n
\psi(Y_i;q(z,h),z,h)
\xrightarrow{d}
\mathcal{N}(0,B(z,h)),
\]
where
\[
B(z,h)=\Var!\bigl(\psi(Y;q(z,h),z,h)\bigr).
\]
Combining this with the linear representation \eqref{eq:A.linear} yields
\[
\sqrt{n}\bigl(\hat q(z,h)-q(z,h)\bigr)
\xrightarrow{d}
\mathcal{N}\left(
0,
\frac{B(z,h)}{(2f(q(z,h))+h)^2}
\right).
\]
This completes the proof of Theorem~\ref{thm:CLT}.
\hfill
$\square$
\subsection*{Closed-form Expression for \(B(z,h)\)}
By a direct calculation using independence and the identity
\(\mathbb{E}[\sgn(Y-q)]=1-2F(q)\),
\begin{align*}
&B(z,h)
=4F(q(z,h))(1-F(q(z,h))) \\
& +2h\Bigl[\mathbb{E}|Y-q(z,h)|
-(m-q(z,h))(1-2F(q(z,h)))\Bigr]
+h^2\Var(Y),
\end{align*}
as stated in the theorem.
\section{Proof of Theorem~\ref{thm:plugin-clt}: Plug--in Quantile CLT}
\label{app:plugin-clt}
This appendix derives the asymptotic expansion and central limit theorem for the
implementable estimator
\[
\hat q\bigl(\hat z(\tau,h),h\bigr),
\qquad
\hat z(\tau,h)=1-2\tau+h\bigl(\bar Y-\hat q(\tau)\bigr),
\]
with population target
\[
q(z(\tau,h),h)=q\bigl(z(\tau,h),h\bigr),
\qquad
z(\tau,h)=1-2\tau+h\bigl(m-F^{-1}(\tau)\bigr).
\]
The goal is to obtain a linear representation of
\[
\sqrt{n}\bigl(\hat q(\hat z(\tau,h),h)-q(z(\tau,h),h)\bigr)
\]
and identify the corresponding influence function and asymptotic variance.
\subsection*{Linearisation for the Plug--in Estimator}
Applying the linear representation from Appendix~\ref{app:clt1} (Theorem~\ref{thm:CLT}) to the random parameter \(\hat z(\tau,h)\) gives directly
\begin{equation}
\begin{aligned}
\sqrt n\bigl(\hat q(\hat z(\tau,h),h)-q(\hat z(\tau,h),h)\bigr)
&=
-\frac{1}{2f(q(\hat z(\tau,h),h))+h}\,
  \sqrt n\,\hat\Psi\!\bigl(q(\hat z(\tau,h),h);\hat z(\tau,h),h\bigr)
\\
&\quad
+\,o_p(1).
\end{aligned}
\tag{B.1}\label{eq:plug-start}
\end{equation}

\paragraph{Expansion of \(q(\hat z(\tau,h),h)\).}
By the differentiability of \(q(z,h)\) with respect to \(z\),
\begin{equation}
q(\hat z(\tau,h),h)
=
q(z(\tau,h),h) + \frac{\partial q}{\partial z}(z(\tau,h),h)\bigl(\hat z(\tau,h)-z(\tau,h)\bigr) + o_p(n^{-1/2}),
\tag{B.2}\label{eq:q-expansion}
\end{equation}
where differentiating \(\Psi(q(z,h);z,h)=0\) gives
\[
\frac{\partial q}{\partial z}(z(\tau,h),h)
=
-\frac{1}{2f(q(z(\tau,h),h))+h}.
\]
\paragraph{Increment expansion of the empirical score.}
\begin{lemma}[Increment expansion of the empirical score]
\label{lem:increment}
Under Assumptions (A1)--(A2), for any sequences \(q_1 = q(z(\tau,h),h) + O_p(n^{-1/2})\), 
\(q_2 = q(z(\tau,h),h) + O_p(n^{-1/2})\), and \(\hat z_n = z^* + O_p(n^{-1/2})\), we have
\[
\hat\Psi(q_1; \hat z, h) - \hat\Psi(q_2; \hat z, h)
= (2f(q(z(\tau,h),h)) + h)(q_1 - q_2) + o_p(n^{-1/2}).
\]
\end{lemma}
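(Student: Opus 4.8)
The plan is to exploit the explicit affine structure of the empirical score in the parameters, which makes the level $\hat z$ drop out of the increment, and then to reduce everything to a single one–term control of the empirical distribution function. Recall from \eqref{eq:empirical-score} that
\[
\hat\Psi(q;z,h) = 2\hat F(q) - 1 + z + h(q-\bar Y).
\]
The terms $-1+z$ and $-h\bar Y$ do not depend on $q$, so they cancel in $\hat\Psi(q_1;\hat z,h)-\hat\Psi(q_2;\hat z,h)$; in particular the data–driven level $\hat z$ (and the target $z^{*}$) plays no role whatsoever in the increment. Writing $q^{\star}:=q(z(\tau,h),h)$, the claim therefore reduces to showing
\[
2\bigl(\hat F(q_1)-\hat F(q_2)\bigr) + h(q_1-q_2) = \bigl(2f(q^{\star})+h\bigr)(q_1-q_2) + o_p(n^{-1/2}),
\]
equivalently $\hat F(q_1)-\hat F(q_2) = f(q^{\star})(q_1-q_2)+o_p(n^{-1/2})$.

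Next I would split the empirical increment into a deterministic part and an empirical–process part. Using $\mathbb{G}_n=\sqrt n(\hat F-F)$ as in Appendix~\ref{app:clt1},
\[
\hat F(q_1)-\hat F(q_2) = \bigl(F(q_1)-F(q_2)\bigr) + \tfrac{1}{\sqrt n}\bigl(\mathbb{G}_n(q_1)-\mathbb{G}_n(q_2)\bigr).
\]
For the deterministic term, Assumption (A2) provides a continuous density $f$ in a neighbourhood of $q^{\star}$; since $q_1,q_2 = q^{\star}+O_p(n^{-1/2})$, a mean–value expansion gives $F(q_1)-F(q_2)=f(\xi_n)(q_1-q_2)$ for some $\xi_n$ between $q_1$ and $q_2$, and continuity of $f$ at $q^{\star}$ yields $f(\xi_n)=f(q^{\star})+o_p(1)$, whence $F(q_1)-F(q_2)=f(q^{\star})(q_1-q_2)+o_p(1)\cdot O_p(n^{-1/2})=f(q^{\star})(q_1-q_2)+o_p(n^{-1/2})$. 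For the empirical–process term I would invoke the stochastic (asymptotic) equicontinuity of $\mathbb{G}_n$ over the Donsker class of indicators $\{\mathbf{1}_{\{y\le x\}}\}$ — the very property already used to obtain \eqref{eq:A.equicontinuity} — which, since $|q_1-q_2|\le|q_1-q^{\star}|+|q_2-q^{\star}|=o_p(1)$, gives $\mathbb{G}_n(q_1)-\mathbb{G}_n(q_2)=o_p(1)$, so that $\tfrac{1}{\sqrt n}(\mathbb{G}_n(q_1)-\mathbb{G}_n(q_2))=o_p(n^{-1/2})$. Summing the two pieces, multiplying by $2$, and adding back $h(q_1-q_2)$ delivers the stated expansion.

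The main obstacle is the equicontinuity step, although it is a mild one already present in the toolkit of Appendix~\ref{app:clt1}: one must verify that $\sup_{|x-q^{\star}|\le\delta_n}|\mathbb{G}_n(x)-\mathbb{G}_n(q^{\star})|=o_p(1)$ along any sequence $\delta_n\downarrow 0$, which follows from the standard modulus–of–continuity estimate for the uniform empirical process, or equivalently from the Donsker property together with the fact that, with probability tending to one, $q_1$, $q_2$ and $q^{\star}$ all lie in an arbitrarily small fixed neighbourhood of $q^{\star}$ on which $F$ is Lipschitz with constant close to $f(q^{\star})$. The hypotheses $q_1=q^{\star}+O_p(n^{-1/2})$, $q_2=q^{\star}+O_p(n^{-1/2})$ and $\hat z_n=z^{*}+O_p(n^{-1/2})$ are precisely what place all relevant quantities inside such a shrinking neighbourhood; once this containment is granted, the remainder is a routine recombination of the two displays above, with no further distributional input needed beyond (A1)--(A2).
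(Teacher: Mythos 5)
Your proposal is correct and follows essentially the same route as the paper's own proof: the affine form $\hat\Psi(q;z,h)=2\hat F(q)-1+z+h(q-\bar Y)$ to make $\hat z$ cancel, a mean--value expansion of $F(q_1)-F(q_2)$ with continuity of $f$ at $q(z(\tau,h),h)$, and asymptotic equicontinuity of $\mathbb{G}_n=\sqrt n(\hat F-F)$ over the Donsker class of indicators to kill the empirical--process increment. The only cosmetic difference is that the paper instantiates the equicontinuity statement with the explicit shrinking radius $\delta=Mn^{-1/2}$, whereas you correctly note that $|q_1-q_2|=o_p(1)$ already suffices.
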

\begin{proof}
Using the canonical representation \(\hat\Psi(q; z, h) = 2\hat F(q) - 1 + z + h(q - \bar Y)\),
\[
\hat\Psi(q_1; \hat z, h) - \hat\Psi(q_2; \hat z, h)
= 2\bigl[\hat F(q_1) - \hat F(q_2)\bigr] + h(q_1 - q_2).
\tag{E.1}
\]
Decompose the empirical CDF increment:
\[
\hat F(q_1) - \hat F(q_2)
= \bigl[\hat F(q_1) - F(q_1)\bigr]
+ \bigl[F(q_1) - F(q_2)\bigr]
+ \bigl[F(q_2) - \hat F(q_2)\bigr].
\]
We analyse the three terms separately:
\begin{enumerate}
    \item \textbf{Differentiability of \(F\)}: By the mean value theorem, there exists \(\tilde{q}\) between \(q_1\) and \(q_2\) such that
    \[
    F(q_1) - F(q_2) = f(\tilde{q})(q_1 - q_2).
    \]
    Since \(q_1, q_2 = q(z(\tau,h),h) + O_p(n^{-1/2})\), we have \(\tilde{q} \xrightarrow{p} q(z(\tau,h),h)\). By continuity of \(f\) at \(q(z(\tau,h),h)\) (Assumption A2), \(f(\tilde{q}) = f(q(z(\tau,h),h)) + o_p(1)\). Thus,
    \[
    F(q_1) - F(q_2) = f(q(z(\tau,h),h))(q_1 - q_2) + o_p(1)(q_1 - q_2).
    \]
    Because \(q_1 - q_2 = O_p(n^{-1/2})\), the term \(o_p(1)(q_1 - q_2)\) is \(o_p(n^{-1/2})\). Hence,
    \[
    F(q_1) - F(q_2)= f(q(z(\tau,h),h))(q_1 - q_2) + o_p(n^{-1/2}).
    \]
    \item \textbf{Asymptotic equicontinuity of the empirical process}: 
    The class \(\mathcal{F} = \{\mathbf{1}_{\{y \le x\}} : x \in \mathbb{R}\}\) is Donsker 
    \citep[Theorem~2.5.2]{Vaart1996}. Consequently, the empirical process 
    \(\mathbb{G}_n = \sqrt{n}(\hat F - F)\) is uniformly asymptotically equicontinuous: 
    for every \(\eta > 0\),
    \[
    \lim_{\delta \downarrow 0} \limsup_{n\to\infty} 
    P\!\left( \sup_{\substack{x_1,x_2\in\mathbb{R}\\ |x_1-x_2|\le\delta}} 
    |\mathbb{G}_n(x_1) - \mathbb{G}_n(x_2)| > \eta \right) = 0.
    \]
    Since \(|q_1 - q_2| = O_p(n^{-1/2})\), there exists a constant \(M > 0\) such that 
    \(|q_1 - q_2| \le M n^{-1/2}\) with probability approaching 1. Taking \(\delta = M n^{-1/2}\) 
    in the equicontinuity statement yields
    \[
    \mathbb{G}_n(q_1) - \mathbb{G}_n(q_2) = o_p(1),
    \]
    i.e.,
    \[
    \sqrt{n}\bigl[(\hat F(q_1) - F(q_1)) - (\hat F(q_2) - F(q_2))\bigr] = o_p(1).
    \]
    Dividing by \(\sqrt{n}\) gives the required rate:
    \[
    (\hat F(q_1) - F(q_1)) - (\hat F(q_2) - F(q_2)) = o_p(n^{-1/2}).
    \]
\end{enumerate}
Combining these results,
\[
\hat F(q_1) - \hat F(q_2) 
= f(q^*)(q_1 - q_2) + o_p(n^{-1/2}).
\]
Substituting into (E.1) yields
\begin{align*}
\hat\Psi(q_1; \hat z, h) - \hat\Psi(q_2; \hat z, h)
&= 2f\bigl(q(z(\tau,h),h)\bigr)(q_1 - q_2) + h(q_1 - q_2) + o_p(n^{-1/2}) \nonumber\\
&= (2f\bigl(q(z(\tau,h),h)\bigr) + h)(q_1 - q_2) + o_p(n^{-1/2}).
\end{align*}
which completes the proof.
\end{proof}

\noindent
Applying Lemma~\ref{lem:increment} with \(q_1 = q(\hat z(\tau,h), h)\) and \(q_2 = q(z(\tau,h),h)\) gives
\begin{align}
\hat\Psi(q(\hat z(\tau,h), h); \hat z(\tau,h), h)
&= \hat\Psi(q(z(\tau,h),h); \hat z(\tau,h), h) \nonumber\\
&\quad + (2f(q(z(\tau,h),h)) + h)\bigl(q(\hat z(\tau,h), h) - q(z(\tau,h),h)\bigr) \nonumber\\
&\quad + o_p(n^{-1/2}).
\tag{B.3}\label{eq:plug-step2}
\end{align}
From the expansion in \eqref{eq:q-expansion}, we have
\[
q(\hat z(\tau,h),h)-q(z(\tau,h),h)
=
-\frac{1}{2f(q(z(\tau,h),h))+h}\bigl(\hat z(\tau,h)-z(\tau,h)\bigr)
+ o_p(n^{-1/2}).
\]
Substituting this into \eqref{eq:plug-step2} yields
\[
\hat\Psi(q(\hat z(\tau,h),h);\hat z(\tau,h),h)
=
\hat\Psi(q(z(\tau,h),h);\hat z(\tau,h),h)
-
\bigl(\hat z(\tau,h)-z(\tau,h)\bigr)
+ o_p(n^{-1/2}).
\]
Now, using the linearity of \(\hat\Psi\) in its second argument,
\[
\hat\Psi(q(z(\tau,h),h);\hat z(\tau,h),h)
= \hat\Psi(q(z(\tau,h),h); z(\tau,h),h) 
+ \bigl(\hat z(\tau,h)-z(\tau,h)\bigr)
+ o_p(n^{-1/2}).
\]
Combining these two expressions, we obtain the key simplification:
\[
\hat\Psi(q(\hat z(\tau,h),h);\hat z(\tau,h),h)
= \hat\Psi(q(z(\tau,h),h); z(\tau,h),h) 
+ o_p(n^{-1/2}).
\]
Substituting this identity into \eqref{eq:plug-start} gives immediately
\begin{align*}
\sqrt{n}\Bigl(\hat q(\hat z(\tau,h),h)-q(z(\tau,h),h)\Bigr)
=&-\frac{1}{2f(q(z(\tau,h),h))+h}\sqrt{n}\,\hat\Psi(q(z(\tau,h),h); z(\tau,h),h) \\
&-\frac{1}{2f(q(z(\tau,h),h))+h}\sqrt{n}\bigl(\hat z(\tau,h)-z(\tau,h)\bigr) \\
&+o_p(1),
\tag{B.4}\label{eq:q-final}
\end{align*}

\medskip
\noindent
Equation~\eqref{eq:q-final} reveals the structure of the Plug--in estimator's asymptotic distribution.
The first term,
\[
-\frac{1}{2f(q(z(\tau,h),h))+h}\,
\sqrt{n}\,\hat\Psi(q(z(\tau,h),h); z(\tau,h),h),
\]
corresponds exactly to the expansion we would obtain from Theorem~\ref{thm:CLT} 
if we treated $z$ as fixed and equal to $z(\tau,h)$.  
The second term,
\[
-\frac{1}{2f(q(z(\tau,h),h))+h}\,
\sqrt{n}\bigl(\hat z(\tau,h)-z(\tau,h)\bigr),
\]
is the correction that accounts for estimating the parameter $z(\tau,h)$ by 
$\hat z(\tau,h)$.  
Thus, Theorem~\ref{thm:plugin-clt} can be viewed as Theorem~\ref{thm:CLT} applied at 
$z=z(\tau,h)$, plus an additional contribution arising from the Plug--in step.

\medskip
\noindent
This completes the derivation of the linear representation for 
$\hat q(\hat z(\tau,h),h)$.
\subsection*{Expansion of the Plug--in Parameter $\hat z(\tau,h)$}
By definition,
\[
\hat z(\tau,h)-z(\tau,h)=h(\bar Y-m)-h(\hat q(\tau)-q(\tau)).
\]
Using standard linearisations,
\[
\sqrt{n}(\bar Y-m)
=\frac1{\sqrt{n}}\sum_{i=1}^n(Y_i-m),
\]
\[
\sqrt{n}(\hat q(\tau)-q(\tau))
=\frac1{f(q(\tau))}\,\frac1{\sqrt{n}}\sum_{i=1}^n
\bigl(\mathbf 1\{Y_i\le q(\tau)\}-\tau\bigr)+o_p(1),
\]
Recall that $\hat z(\tau,h) = 1-2\tau + h(\bar Y - \hat q(\tau))$ and 
$z(\tau,h) = 1-2\tau + h(m - q(\tau))$, where $q(\tau)=F^{-1}(\tau)$ is the standard 
population quantile. 

\noindent Define
\[
\Gamma_\tau(y)
=
(y-m)-\frac{1}{f(q(\tau))}
\bigl(\mathbf{1}\{y\le q(\tau)\}-\tau\bigr).
\]
Using standard linearisations for the sample mean and 
sample quantile, we obtain
\[
\boxed{
\sqrt{n}\bigl(\hat z(\tau,h) - z(\tau,h)\bigr)
= \frac{h}{\sqrt{n}}
\sum_{i=1}^n
\Gamma_\tau(Y_i)
+ o_p(1).
}
\]
For each fixed $h\ge 0$, the value $z(\tau,h)$ is uniquely defined by
$q(z(\tau,h),h)=q(\tau)$.
\subsection*{Influence Function of $\hat q(\hat z(\tau,h),h)$}
From the linear representation \eqref{eq:q-final}, we have
\begin{align*}
\sqrt{n}\Bigl(\hat q(\hat z(\tau,h),h)-q(z(\tau,h),h)\Bigr)
=&-\frac{1}{2f(q(z(\tau,h),h))+h}\sqrt{n}\,\hat\Psi(q(z(\tau,h),h); z(\tau,h),h) \\
&-\frac{1}{2f(q(z(\tau,h),h))+h}\sqrt{n}\bigl(\hat z(\tau,h)-z(\tau,h)\bigr) \\
&+o_p(1).
\end{align*}
The first term expands as
\[
\sqrt{n}\,\hat\Psi(q(z(\tau,h),h);z(\tau,h),h)
= 
\frac{1}{\sqrt{n}}\sum_{i=1}^n \psi(q(z(\tau,h),h)-Y_i;z(\tau,h),h)
+ o_p(1).
\]
Second, using the standard quantile CLT with $q(\tau)=F^{-1}(\tau)$,
\[
\sqrt{n}(\hat z(\tau,h) - z(\tau,h))
=
\frac{h}{\sqrt{n}}\sum_{i=1}^n
\left[
(Y_i - m)
-
\frac{1}{f(q(\tau))}\bigl(\mathbf{1}\{Y_i\le q(\tau)\}-\tau\bigr)
\right]
+ o_p(1).
\]
Combining both expansions gives
\[
\sqrt{n}\bigl(\hat q(\hat z(\tau,h),h) - q(z(\tau,h),h)\bigr)
=
\frac{1}{\sqrt{n}}\sum_{i=1}^n \Gamma(Y_i)
+ o_p(1),
\]
where the influence function is
\begin{equation*}
\begin{aligned}
\Gamma(y)
&= -\frac{1}{2f(q(z(\tau,h),h))+h}\,
    \psi(q(z(\tau,h),h)-y; z(\tau,h), h) \\
&\quad - \frac{h}{2f(q(z(\tau,h),h))+h}
    \left[
        (y-m)
        - \frac{1}{f(q(\tau))}
          \bigl(\mathbf{1}\{y \le q(\tau)\}-\tau\bigr)
    \right].
\end{aligned}
\end{equation*}
Thus,
\[
\sqrt{n}\bigl(\hat q(\hat z(\tau,h),h) - q(z(\tau,h),h)\bigr)
\;\xrightarrow{d}\;
\mathcal{N}\!\bigl(0,\,\Var(\Gamma(Y))\bigr).
\]
This completes the proof of Theorem~\ref{thm:plugin-clt}.
\subsection*{Simplification of the Influence Function}
Since \(q(z(\tau,h),h) = q(\tau)\), we have:
\[
f(q(z(\tau,h),h)) = f(q(\tau)), \qquad 
\mathbf{1}\{Y \le q(z(\tau,h),h)\} = \mathbf{1}\{Y \le q(\tau)\}.
\]
Moreover, from the population equation \(2F(q(\tau)) - 1 + z(\tau,h) + h(q(\tau) - m) = 0\), we obtain:
\[
z(\tau,h) = 1 - 2\tau - h(q(\tau) - m).
\]
Substituting these identities into the expression for \(\Gamma(y)\):
\begin{align*}
\Gamma(y) &= -\frac{1}{2f(q(\tau))+h}\,
            \psi(q(\tau)-y; z(\tau,h), h) \\
          &\quad - \frac{h}{2f(q(\tau))+h}
            \left[
              (y-m)
              - \frac{1}{f(q(\tau))}
                \bigl(\mathbf{1}\{y \le q(\tau)\}-\tau\bigr)
            \right].
\end{align*}
Now expand \(\psi(q(\tau)-y; z(\tau,h), h) = \operatorname{sgn}(q(\tau)-y) + h(q(\tau)-y) + z(\tau,h)\).
Using \(\operatorname{sgn}(q(\tau)-y) = 1 - 2\cdot\mathbf{1}\{y \le q(\tau)\}\) and the expression for \(z(\tau,h)\):
\begin{align*}
\psi(q(\tau)-y; z(\tau,h), h) 
&= [1 - 2\cdot\mathbf{1}\{y \le q(\tau)\}] + h(q(\tau)-y) + [1 - 2\tau - h(q(\tau) - m)] \\
&= 2 - 2\tau - 2\cdot\mathbf{1}\{y \le q(\tau)\} - h(y - m).
\end{align*}
Substituting back into \(\Gamma(y)\):
\begin{align*}
\Gamma(y) &= -\frac{1}{2f(q(\tau))+h}\left[2 - 2\tau - 2\cdot\mathbf{1}\{y \le q(\tau)\} - h(y - m)\right] \\
          &\quad - \frac{h}{2f(q(\tau))+h}\left[(y-m) - \frac{1}{f(q(\tau))}\bigl(\mathbf{1}\{y \le q(\tau)\}-\tau\bigr)\right] \\
          &= \frac{1}{f(q(\tau))}\bigl(\mathbf{1}\{y \le q(\tau)\} - \tau\bigr) + C.
\end{align*}
After simplification, all terms involving \(h\) cancel, yielding:
\[
\boxed{\Gamma(y) = \frac{1}{f(q(\tau))}\bigl(\mathbf{1}\{y \le q(\tau)\} - \tau\bigr).}
\]
Thus,
\[
\Var(\Gamma(Y)) = \frac{1}{f^2(q(\tau))}\,\Var\!\bigl(\mathbf{1}\{Y \le q(\tau)\}\bigr) 
               = \frac{\tau(1-\tau)}{f^2(q(\tau))}.
\]
This completes the proof.
\section{Proof of Theorem~\ref{thm:mean-clt-main}: Mean--Estimating CLT}
\label{app:mean-clt}
The estimator of Theorem~\ref{thm:mean-clt-main} is the same plug--in estimator studied in Theorem~\ref{thm:plugin-clt}, but evaluated at the random level
\[
\hat\tau=\hat F(\bar Y)
\]
instead of a fixed $\tau$.  Let $\tau_m=F(m)$. Since $\hat\tau \to \tau_m$ in probability, the uniform linearisation derived in the proof of Theorem~\ref{thm:plugin-clt} remains valid when $\tau$ is replaced by $\hat\tau$.
In particular, from the representation used in that proof, we have
\begin{align*}
\sqrt{n}\bigl(\hat q(\hat z,h) - q(z(\hat\tau,h),h)\bigr)
=&
-\frac{1}{2f(m)+h}\sqrt{n}\,\hat\Psi\bigl(m;z_m,h\bigr)\\
&-\frac1{2f(m)+h}\sqrt{n}\bigl(\hat z-z(\hat\tau,h)\bigr)
+o_p(1),
\tag{C.1}
\label{C.1}
\end{align*}
where $z_m=1-2F(m)$ and $m=q(z_m,h)$.
Using a Taylor expansion of $F$ around $m$, we have
\[
\sqrt{n}(\hat\tau-\tau_m)
=
\sqrt{n}\bigl(\hat F(m)-F(m)\bigr)
+
f(m)\sqrt{n}(\bar Y-m)
+o_p(1),
\]
and, by the Bahadur representation at the random level $\hat\tau$,
\[
\sqrt{n}\bigl(\hat q(\hat\tau)-q(\hat\tau)\bigr)
=
\frac1{f(m)}\sqrt{n}\bigl(\hat F(m)-F(m)\bigr)
+
\sqrt{n}(\bar Y-m)
+o_p(1).
\]
Substituting these expansions into the definition of $\hat z-z(\hat\tau,h)$ and inserting the result into \eqref{C.1}, all terms involving
$\sqrt{n}(\hat F(m)-F(m))$ cancel. After simplification, we obtain
\[
\sqrt{n}\bigl(\hat q(\hat z,h)-m\bigr)
=
\sqrt{n}(\bar Y-m)+o_p(1).
\tag{C.2}
\]
Therefore, for every fixed $h\ge0$,
\[
\sqrt{n}\bigl(\hat q(\hat z,h)-m\bigr)
\ \xrightarrow{d}\
\mathcal{N}\bigl(0,\Var(Y)\bigr),
\]
This completes the proof of Theorem~\ref{thm:mean-clt-main}.
\end{document}